\renewcommand{\cal}[1]{\mathcal{ #1 }}
\newcommand{\mb}[1]{\mathbf{ #1 }}
\newcommand{\bs}[1]{\boldsymbol{ #1 }}
\newcommand{\bb}[1]{\mathbb{ #1 }}
\newcommand{\norm}[1]{\left\Vert #1 \right\Vert}
\newcommand{\grad}{\nabla}
\newcommand{\intersect}{\cap}
\newcommand{\union}{\cup}
\newcommand{\R}{\bb{R}}
\DeclareMathOperator*{\argmin}{arg\,min}
\newtheorem{theorem}{Theorem}
\newtheorem{corollary}{Corollary}
\newtheorem{lemma}{Lemma}
\theoremstyle{definition}
\newtheorem{definition}{Definition}
\theoremstyle{remark}
\newtheorem{remark}{Remark}
\theoremstyle{definition}
\newtheorem{assumption}{Assumption}
\theoremstyle{definition}
\newtheorem{proposition}{Proposition}
\title{\LARGE \bf A Control Lyapunov Perspective on Episodic Learning\\ via Projection to State Stability}
\author{Andrew J. Taylor$^{1}$, Victor D. Dorobantu$^{1}$, Meera Krishnamoorthy, \\ Hoang M. Le, Yisong Yue, and Aaron D. Ames
\thanks{*This work was supported by Google Brain Robotics and DARPA Award HR00111890035}
\thanks{$^{1}$Both authors contributed equally.}%
\thanks{All authors are with the Department of Computing and Mathematical Sciences, California Institute of Technology, Pasadena, CA 91125, USA {\tt\small ajtaylor@caltech.edu, vdoroban@caltech.edu, mkrishna@caltech.edu, hmle@caltech.edu, yyue@caltech.edu, ames@caltech.edu}}%
}
\begin{document}

\maketitle
\thispagestyle{empty}
\pagestyle{empty}

\begin{abstract}

The goal of this paper is to understand the impact of learning on control synthesis from a Lyapunov function perspective.  In particular, rather than consider uncertainties in the full system dynamics, we employ Control Lyapunov Functions (CLFs) as low-dimensional projections. To understand and characterize the uncertainty that these projected dynamics introduce in the system, we introduce a new notion: Projection to State Stability (PSS). PSS can be viewed as a variant of Input to State Stability defined on projected dynamics, and enables characterizing  robustness of a CLF with respect to the data used to learn system uncertainties. We use PSS to bound uncertainty in affine control, and demonstrate that a practical episodic learning approach can use PSS to characterize uncertainty in the CLF for robust control synthesis.
    
    
    
    
\end{abstract}

\section{INTRODUCTION}
\label{sec:intro}

Properly characterizing uncertainty is a key aspect of robust control \cite{zhou1996robust}.  
With the increasing use of learning for dynamics modelling and control synthesis \cite{aswani2013provably,chowdhary2015bayesian,berkenkamp2016bsafe,dean2018regret,amos2018differentiable,taylor2019episodic,shi2019neural}, it is correspondingly important to develop new tools to reason about the interplay between learning and robust control.

In this paper, we focus on the interplay between learning and robustness for control synthesis using Control Lyapunov Functions (CLFs) \cite{artstein1983stabilization,lin1991universal}.  
%
%
The use of CLFs has seen multiple applications in recent years \cite{ma2017bipedal,galloway2015torque,nguyen2015optimal}, and one of their primary benefits is to  enable control objectives to be represented in a low-dimensional form that can be integrated with optimization methods to yield optimal controllers \cite{ames2013towards}.  This low-dimensional form is also appealing from a learning perspective, as learning is typically more tractable in lower-dimensional spaces \cite{vapnik1999overview,zhou2002covering,taylor2019episodic}.

The practical design of CLFs remains challenging.  In many cases, extensive tuning upon deployment is necessary \cite{ma2017bipedal}, and even with this tuning the system is often not able to track  a desired state or trajectory perfectly.  Other approaches, such as those based on adaptive control \cite{krstic1995control}, can adaptively learn a CLF but are restricted to learning over specific classes of model uncertainty. 

We thus build upon ideas in robust control in order to guarantee performance in the presence of model mis-specification.  The idea of robust CLFs is not new (cf. \cite{freeman1996inverse,freeman2008robust}), but existing  analyses  focus on the full-dimensional state dynamics, which can be burdensome for learning.


In this paper, we make two main contributions.  First, we propose a novel characterization called \textit{Projection to State Stability} (PSS), which is a variant of the well-studied Input to State Stability (ISS) property \cite{sontag1989smooth,sontag1995characterizations,sontag1995characterizations2,wang1996converse,sontag2008input}, but defined on projected dynamics rather than the original state dynamics. 
Like ISS, PSS provides a tool to characterize  tracking error in terms of the magnitude of the disturbance or uncertainty.  Unlike ISS, PSS can  characterize dynamic uncertainty directly in the derivative of a CLF, thus allowing a low dimensional representation of the uncertainty. In our second contribution, we demonstrate the practicality of PSS by incorporating it into an episodic learning algorithm.

Our paper is organized as follows. Section \ref{sec:prelims} reviews CLFs and ISS. Section \ref{sec:pss} defines Projection to State Stability (PSS), and how PSS enables constructing bounds on the state of a system that depend on a projected disturbance. Section \ref{sec:modeling} defines a broad class of model uncertainty for affine control systems, evaluates how this uncertainty impacts the Lyapunov derivative, and demonstrates how to restrict this uncertainty with data to determine if a system is PSS. Section \ref{sec:learning} discusses how episodic learning can be used to improve PSS guarantees in practice, and presents simulation results with an uncertain inverted pendulum model.


\section{PRELIMINARIES}
\label{sec:prelims}
This section provides a review of Control Lyapunov Functions (CLFs) and Input to State Stability (ISS). These tools will be used in Section \ref{sec:pss} to define Projection to State Stability. 
This section concludes with a brief discussion of how these definitions must be modified to hold over a restriction of the domain.

Consider a state space $\cal{X} \subseteq \R^n$ and a control input space $\cal{U} \subseteq \R^m$. Assume that $\cal{X}$ is path-connected and that $\mb{0} \in \cal{X}$. Consider a system governed by:
\begin{equation}\label{eq:controlsys}
    \dot{\mb{x}} = \mb{f}(\mb{x}, \mb{u}),
\end{equation}
for state $\mb{x} \in \cal{X}$ and its derivative $\dot{\mb{x}}$, control input $\mb{u} \in \cal{U}$, and dynamics $\mb{f}: \cal{X} \times \cal{U} \to \R^n$. In this paper we assume $\mb{f}$ is locally Lipschitz continuous. The following definitions, taken from \cite{Khalil}, are useful in analyzing stability of \eqref{eq:controlsys}.

\begin{definition}[\it{Class $\cal{K}$ Function}]
A continuous function $\alpha: [0, a) \to \R_+$, with $a > 0$, is class $\cal{K}$, denoted $\alpha \in \cal{K}$, if it is monotonically (strictly) increasing and satisfies $\alpha(0) = 0$. If the domain of $\alpha$ is all of $\R_+$ and $\lim_{r \to \infty} \alpha(r) = \infty$, then $\alpha$ is termed radially unbounded and class $\cal{K}_{\infty}$.
\end{definition}

\begin{definition}[\it{Class $\cal{KL}$ Function}]
A continuous function $\beta: [0, a) \times \R_+ \to \R_+$, with $a > 0$, is class $\cal{KL}$, denoted $\beta \in \cal{KL}$, if the function $r \mapsto \beta(r, s) \in \cal{K}$ for all $s \in \R_+$, and the function $s \mapsto \beta(r, s)$ is monotonically non-increasing with $\beta(r,s)\rightarrow 0$ as $s \rightarrow \infty$ for all $r \in [0, a)$.
\end{definition}
We note that the strictly increasing nature of Class $\cal{K}$ $(\cal{K}_\infty)$ functions permits an inverse Class $\cal{K}$ $(\cal{K}_\infty)$ function $\alpha^{-1}:[0,\alpha(a))\to \R_+$. We also note that the composition of Class $\cal{K}$ ($\cal{K}_\infty$) functions is itself a Class $\cal{K}$ ($\cal{K}_\infty$) function. Given these definitions, we define Control Lyapunov Functions (CLFs) as in \cite{artstein1983stabilization}, \cite{lin1991universal}.

\begin{definition}[\it{Control Lyapunov Function}]\label{def:clf}
A continuously differentiable function $V: \cal{X} \to \R_+$ is a CLF for $(\ref{eq:controlsys})$ on $\cal{X}$ if there exist $\underline{\alpha}, \overline{\alpha}, \alpha \in \cal{K}_{\infty}$ such that:
\begin{align}\label{eq:clf}
    \underline{\alpha}(\norm{ \mb{x} }) \leq V(\mb{x}) &\leq \overline{\alpha}(\norm{\mb{x}})\nonumber\\
    \inf_{\mb{u} \in \cal{U}} \dot{V}(\mb{x}, \mb{u}) &\leq -\alpha(\norm{\mb{x}}),
\end{align}
for all $\mb{x} \in \cal{X}$.
\end{definition}
If there exists a CLF for a system, then a state-feedback controller $\mb{k}: \cal{X} \to \cal{U}$ can be selected such that $\mb{0}$ is a globally asymptotically stable equilibrium point. In particular, for all $\mb{x} \in \cal{X}$, $\mb{k}(\mb{x})$ should be chosen such that $\dot{V}(\mb{x}, \mb{k}(\mb{x})) \leq -\alpha(\norm{\mb{x}})$. We note that $\underline{\alpha}, \overline{\alpha}, \alpha$ only need to be Class $\cal{K}$ for this definition, but we extend them to $\cal{K}_\infty$ to simplify later analysis.

To accommodate disturbances or uncertainties, we consider a disturbance space $\cal{D} \subseteq \R^d$, and a modified system:
\begin{equation}\label{eq:distsys}
    \dot{\mb{x}} = \mb{f}(\mb{x}, \mb{u}, \mb{d}),
\end{equation}
for disturbance $\mb{d} \in \cal{D}$ and dynamics $\mb{f}: \cal{X} \times \cal{U} \times \cal{D} \to \R^n$. We again assume $\mb{f}$ is locally Lipschitz continuous. The disturbance may be time-varying, state-dependent, and/or input-dependent. We assume that the disturbance is bounded for almost all times $t \geq 0$ (essentially bounded in time). This leads to the definition of ISS and ISS-CLFs as formulated in \cite{sontag1989smooth}, \cite{sontag1995characterizations}.

\begin{definition}[\it{Input to State Stability}]
    \label{def:iss}
    Given a state-feedback controller $\mb{k}:\cal{X}\to\cal{U}$, a system is Input to State Stable (ISS) if there exist $\beta \in \cal{KL}_{\infty}$ and $\gamma \in \cal{K}_{\infty}$ such that the solution to (\ref{eq:distsys}) satisfies:
    \begin{equation} \label{eqn:iss}
        \norm{\mb{x}(t)} \leq \beta(\norm{\mb{x}(0)}, t) + \gamma\left( \sup_{\tau \geq 0} \norm{\mb{d}(\tau)} \right),
    \end{equation}
    for all $t \geq 0$.
\end{definition}

\begin{definition}[\it{Input to State Stable Control Lyapunov Function}]\label{def:issclf}
A continuously differentiable function $V: \cal{X} \to \R_+$ is an Input to State Stable Control Lyapunov Function (ISS-CLF) for (\ref{eq:distsys}) on $\cal{X}$ if there exist $\underline{\alpha}, \overline{\alpha}, \alpha, \rho \in \cal{K}_{\infty}$ such that:
\begin{align}\label{eq:issclf}
    \underline{\alpha}(\norm{ \mb{x} }) \leq V(\mb{x}) &\leq \overline{\alpha}(\norm{\mb{x}})\nonumber\\
    \norm{\mb{x}} \geq \rho(\norm{\mb{d}}) \implies \inf_{\mb{u} \in \cal{U}} \dot{V}(\mb{x}, \mb{u}, \mb{d}) &\leq -\alpha(\norm{\mb{x}}),
\end{align}
for all $\mb{x} \in \cal{X}$ and $\mb{d} \in \cal{D}$. 
\end{definition}
As with CLFs, if there exists an ISS-CLF for a system, then a state-feedback controller $\mb{k}: \cal{X} \to \cal{U}$ can be chosen such that the system is ISS. If the disturbance is input-dependent, it is additionally required that $\mb{k}$ induces essentially bounded disturbances in time.

The condition on the Lyapunov function derivative in (\ref{eq:clf}) or (\ref{eq:issclf}) may not be satisfied on the entire state space $\cal{X}$. In particular it may only be satisfied on a subset $\cal{C} \subseteq \cal{X}$. The system may leave $\cal{C}$ during its evolution, implying the desired derivative condition may no longer be satisfiable. We therefore consider the following definition and lemma.
\begin{definition}[\it{Forward Invariance}]
    Consider the system governed by (\ref{eq:controlsys}). A subset $\cal{F} \subseteq \cal{X}$ is forward invariant if there exists a state-feedback controller $\mb{k}: \cal{X} \to \cal{U}$ such that $\mb{x}(0) \in \cal{F}$ implies $\mb{x}(t) \in \cal{F}$ for all $t \geq 0$.
\end{definition}

The definition of forward invariance applies to systems governed by (\ref{eq:distsys}), with disturbances appropriately restricted to subsets of $\cal{D}$ if the disturbances are modeled as state-dependent and/or input-dependent. If $\mb{0}\in\cal{C}$, we may restrict Definitions \ref{def:clf} and \ref{def:issclf} to a forward invariant subset $\cal{F} \subseteq \cal{C}$ with $\mb{0} \in \cal{F}$, provided such a subset exists.

\begin{lemma}\label{lem:issomega}
    A sublevel set $\Omega \subseteq \cal{X}$ of an ISS-CLF $V$ is a forward invariant set, provided $\norm{\mb{x}} \geq \rho(\norm{\mb{d}})$ for all $\mb{x} \in \partial \Omega$ and appropriately restricted $\mb{d} \in \cal{D}$.
\end{lemma}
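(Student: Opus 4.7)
The plan is to exhibit a state-feedback controller that makes $\dot{V}\le 0$ on $\partial\Omega$, and then conclude forward invariance of $\Omega$ by a first-exit-time contradiction.

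First, I would write $\Omega=\{\mb{x}\in\cal{X}:V(\mb{x})\le c\}$ for some $c>0$, so that by continuity of $V$ the boundary satisfies $\partial\Omega\subseteq\{\mb{x}:V(\mb{x})=c\}$. The hypothesis $\norm{\mb{x}}\ge\rho(\norm{\mb{d}})$ on $\partial\Omega$ activates the ISS-CLF inequality of Definition \ref{def:issclf}, yielding $\inf_{\mb{u}\in\cal{U}}\dot{V}(\mb{x},\mb{u},\mb{d})\le-\alpha(\norm{\mb{x}})\le 0$. I would then pick a state-feedback controller $\mb{k}$ realizing this decrease at every boundary point; the availability of such a selector is precisely what the ISS-CLF property supplies, and away from the boundary $\mb{k}$ may be taken to be the stabilizing controller that the ISS-CLF already furnishes.

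Next, fix any trajectory $\mb{x}(\cdot)$ of \eqref{eq:distsys} under $\mb{k}$ with $\mb{x}(0)\in\Omega$, and suppose for contradiction that $\mb{x}(t')\notin\Omega$ for some $t'>0$. By continuity of $t\mapsto V(\mb{x}(t))$, the first-exit time $t^\star=\inf\{t\ge 0:V(\mb{x}(t))>c\}$ is well defined and satisfies $V(\mb{x}(t^\star))=c$, so $\mb{x}(t^\star)\in\partial\Omega$. The derivative bound from the previous step then gives $\dot{V}(\mb{x}(t^\star),\mb{k}(\mb{x}(t^\star)),\mb{d}(t^\star))\le-\alpha(\norm{\mb{x}(t^\star)})\le 0$, which is incompatible with the existence of instants arbitrarily close to and greater than $t^\star$ at which $V(\mb{x}(t))>c$. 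Hence no such exit time exists, and $\Omega$ is forward invariant under $\mb{k}$.

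The main obstacle I anticipate is careful handling of the disturbance dependence: when $\mb{d}$ is state- or input-dependent, one must verify that under $\mb{k}$ the induced disturbance stays essentially bounded and restricted so that the activation condition $\norm{\mb{x}}\ge\rho(\norm{\mb{d}})$ continues to hold at every boundary point the trajectory can reach — this is the content of the ``appropriately restricted $\mb{d}\in\cal{D}$'' qualification in the statement. A minor secondary point is that the non-strict inequality $\dot V\le 0$ is enough to rule out level-crossing at $t^\star$, so no additional care is needed at boundary points where $\alpha(\norm{\mb{x}})=0$ (which can only occur at $\mb{x}=\mb{0}$, and is excluded whenever $c>0$).
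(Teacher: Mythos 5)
Your argument is correct and follows the same skeleton as the paper's proof: use the ISS-CLF derivative condition (activated on $\partial\Omega$ by the hypothesis $\norm{\mb{x}}\geq\rho(\norm{\mb{d}})$) to select a controller with $\dot{V}\leq-\alpha(\norm{\mb{x}})$ on the boundary, then conclude that the sublevel set cannot be exited. The only difference is in the second step: the paper invokes Nagumo's Theorem, whereas you prove the invariance by hand with a first-exit-time contradiction. Your elementary route is self-contained and arguably cleaner, but one parenthetical claim is wrong as stated: a non-strict inequality $\dot{V}(t^\star)\leq 0$ at the single instant $t^\star$ does \emph{not} rule out level-crossing (e.g., $V(\mb{x}(t))=c+(t-t^\star)^2$ has zero derivative at $t^\star$ yet immediately exceeds $c$). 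Your proof survives anyway because the decrease is genuinely strict on $\partial\Omega$ when $c>0$: since $\underline{\alpha}(\norm{\mb{x}})\leq V(\mb{x})=c$ forces $\mb{x}\neq\mb{0}$ there, one has $\dot{V}\leq-\alpha(\norm{\mb{x}})<0$, and a strictly negative derivative at $t^\star$ does contradict the existence of times $t_n\downarrow t^\star$ with $V(\mb{x}(t_n))>c$. You should lean on the strictness rather than claim the non-strict bound suffices.
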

\begin{proof}
 The condition on the Lyapunov derivative in (\ref{eq:issclf}) implies the existence of a state-feedback controller $\mb{k}: \cal{X} \to \cal{U}$ satisfying $\dot{V}(\mb{x}, \mb{k}(\mb{x}), \mb{d}) \leq -\alpha(\norm{\mb{x}})$ for all $\mb{x} \in \partial \Omega$ and appropriately restricted $\mb{d} \in \cal{D}$. Let $c = V(\mb{x})$ for any $\mb{x} \in \partial \Omega$. If $V(\mb{x}(0)) \in [0, c]$, then $V(\mb{x}(t)) \in [0, c]$ for all $t > 0$ by Nagumo's Theorem \cite{nagumo1942lage}, \cite{abraham2012manifolds}. Thus, if $\mb{x}(0) \in \Omega$, then $\mb{x}(t) \in \Omega$ for all $t \geq 0$.
\end{proof}

\section{Projection to State Stability}
\label{sec:pss}
Input to State Stability (ISS) requires a bound on the state in terms of the norm of the disturbance as it appears in the state dynamics (see Definition \ref{def:iss} in Section \ref{sec:prelims}). This requirement does not easily permit analysis of Input to State behavior when the disturbance is more easily described by its impact in a Lyapunov function derivative. This limitation motivates Projection to State Stability (PSS), which instead relies a bound on the state in terms of a projection of the disturbance.

    

\begin{definition}[\it{Dynamic Projection}]
    A continuously differentiable function $\bs{\Pi}: \cal{X}\to\R^k$ is a dynamic projection if there exist $\underline{\sigma}, \overline{\sigma} \in \cal{K}_{\infty}$ satisfying: \begin{equation}\label{eqn:picond}
        \underline{\sigma}(\norm{\mb{x}}) \leq \norm{\bs{\Pi}(\mb{x})} \leq \overline{\sigma}(\norm{\mb{x}}),
    \end{equation}
    for all $\mb{x} \in \cal{X}$.    
\end{definition}

Let $\cal{Y} = \mathrm{range}(\bs{\Pi})$, and let $\mb{y} = \bs{\Pi}(\mb{x})$ for all $\mb{x} \in \cal{X}$. Consider the system governed by (\ref{eq:distsys}). The associated projected system is governed by the dynamics:
\begin{equation}\label{eqn:projdyn}
    \dot{\mb{y}} = \mb{D}_{\bs{\Pi}}(\mb{x})\mb{f}(\mb{x}, \mb{u}, \mb{0}) + \underbrace{\mb{D}_{\bs{\Pi}}(\mb{x})(\mb{f}(\mb{x}, \mb{u}, \mb{d}) - \mb{f}(\mb{x}, \mb{u}, \mb{0}))}_{\bs{\delta}},
\end{equation}
where $\mb{D}_{\bs{\Pi}}: \cal{X} \to \R^{k \times n}$ denotes the Jacobian of $\bs{\Pi}$, and $\bs{\delta}$ is implicitly a function of $\mb{x}$, $\mb{u}$, and $\mb{d}$. For the following definitions, we assume $\bs{\delta}$ is essentially bounded in time. 

We are now ready to state our main definition. 
The key difference between PSS and ISS (Definition \ref{def:iss}) is the use of $\bs{\delta}$ \eqref{eqn:projdyn} rather than the native disturbance $\mb{d}$.

    
\begin{definition}[\it{Projection to State Stability}]
    Given a state-feedback controller $\mb{k}:\cal{X}\to\cal{U}$, a system is Projection to State Stable (PSS) with respect to the projection $\bs{\Pi}$ if there exist $\beta \in \cal{KL}_{\infty}$ and $\gamma \in \cal{K}_{\infty}$ such that the solution to (\ref{eq:distsys}) satisfies:
   \begin{equation}
        \norm{\mb{x}(t)} \leq \beta(\norm{\mb{x}(0)}, t) + \gamma\left( \sup_{\tau \geq 0}  \norm{\,\bs{\delta}(\tau)\,} \right),
    \end{equation}
    for all $t \geq 0$, with $\bs{\delta}$ as defined in (\ref{eqn:projdyn}).
\end{definition}

\begin{remark}
    If $\bs{\Pi}$ is an inclusion map with $k = n$, and the system can be specified as:
    \begin{equation}\label{eqn:affdistrep}
        \mb{f}(\mb{x}, \mb{u}, \mb{d}) = \mb{f}(\mb{x}, \mb{u}, \mb{0}) + \mb{d},
    \end{equation}
 then PSS is equivalent to ISS.
\end{remark}
Similarly, we can also construct a Lyapunov function  that certifies a system is PSS with respect to a projection.  
\begin{definition}[\it{Projection to State Stable Control Lyapunov Function}]
\label{def:SS-CLF}
   A continuously differentiable function $W: \cal{Y} \to \R_+$ is a Projection to State Stable Control Lyapunov Function (PSS-CLF) for (\ref{eqn:projdyn}) on $\cal{X}$ if there exist $\underline{\alpha}, \overline{\alpha}, \alpha, \rho \in \cal{K}_{\infty}$ satisfying:
    \begin{align}\label{eqn:newconds}
        \underline{\alpha}(\norm{\bs{\Pi}(\mb{x})}) \leq W(\bs{\Pi}(\mb{x})) &\leq \overline{\alpha}(\norm{\bs{\Pi}(\mb{x})}) \nonumber\\
        \norm{\bs{\Pi}(\mb{x})} \geq \rho(\norm{\bs{\delta}}) \implies \inf_{\mb{u} \in \cal{U}} \dot{W}(\mb{x}, \mb{u}, \bs{\delta}) &\leq -\alpha(\norm{\bs{\Pi}(\mb{x})}),
    \end{align}
    for all $\mb{x} \in \cal{X}$.
\end{definition}
As with ISS-CLFs, this definition can be restricted to a forward invariant set containing $\mb{0}$. We now show that a PSS-CLF certifies a system is PSS.

    
\begin{theorem}\label{thm:equivalence}
    If the system governed by (\ref{eqn:projdyn}) has a PSS-CLF, then the system governed by (\ref{eq:distsys}) is PSS with respect to the projection $\bs{\Pi}$.
\end{theorem}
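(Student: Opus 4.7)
The plan is to mimic the standard ``ISS--CLF implies ISS'' argument (in the spirit of Khalil, Theorem 4.19), with the state and disturbance replaced throughout by the projected quantities $\bs{\Pi}(\mb{x})$ and $\bs{\delta}$, and then to convert the resulting bound on $\norm{\bs{\Pi}(\mb{x}(t))}$ back into a bound on $\norm{\mb{x}(t)}$ using the sandwich inequality (\ref{eqn:picond}) that defines the dynamic projection. In effect, PSS reduces to ISS on the projected system, plus a class $\cal{K}_\infty$ sandwich that lets us recover a bound in the native state.

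First I would use the implication in (\ref{eqn:newconds}) to select a state-feedback controller $\mb{k}$ so that whenever $\norm{\bs{\Pi}(\mb{x})} \geq \rho(\norm{\bs{\delta}})$ we have
\begin{equation*}
    \dot{W}(\mb{x}, \mb{k}(\mb{x}), \bs{\delta}) \leq -\alpha(\norm{\bs{\Pi}(\mb{x})}) \leq -\alpha\bigl(\overline{\alpha}^{-1}(W(\bs{\Pi}(\mb{x})))\bigr),
\end{equation*}
with $\alpha\circ\overline{\alpha}^{-1} \in \cal{K}_\infty$. The usual comparison-lemma argument, together with the observation that $W$ cannot grow once $\norm{\bs{\Pi}(\mb{x})} \geq \rho(\norm{\bs{\delta}})$, yields a $\sigma \in \cal{KL}_\infty$ such that
\begin{equation*}
    W(\bs{\Pi}(\mb{x}(t))) \leq \max\!\Bigl\{\sigma(W(\bs{\Pi}(\mb{x}(0))), t),\ \underline{\alpha}\!\left(\rho\!\left(\sup_{\tau \geq 0} \norm{\bs{\delta}(\tau)}\right)\right)\!\Bigr\}.
\end{equation*}
Inverting the Lyapunov sandwich on $W$ gives a bound on $\norm{\bs{\Pi}(\mb{x}(t))}$ in terms of $\norm{\bs{\Pi}(\mb{x}(0))}$, $t$, and $\sup_\tau \norm{\bs{\delta}(\tau)}$. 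I would then apply (\ref{eqn:picond}) twice: once via $\overline{\sigma}$ to replace $\norm{\bs{\Pi}(\mb{x}(0))}$ by something controlled by $\norm{\mb{x}(0)}$, and once via $\underline{\sigma}^{-1}$ to bound $\norm{\mb{x}(t)}$ by $\underline{\sigma}^{-1}(\norm{\bs{\Pi}(\mb{x}(t))})$. Finally, I would split the maximum into a sum using $\max(a,b) \leq a+b$ together with the standard identity $\alpha(a+b) \leq \alpha(2a) + \alpha(2b)$ for $\alpha \in \cal{K}_\infty$, and define $\beta$ and $\gamma$ as the resulting compositions of $\overline{\sigma}$, $\overline{\alpha}$, $\sigma$, $\underline{\alpha}^{-1}$, $\underline{\sigma}^{-1}$, and $\rho$, each of which preserves membership in $\cal{KL}_\infty$ or $\cal{K}_\infty$.

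The main obstacle is the comparison-lemma step. The ``disturbance'' $\bs{\delta}$ in the projected dynamics (\ref{eqn:projdyn}) is not a free exogenous input: it is a function of $\mb{x}$, $\mb{u}$, and $\mb{d}$, and so varies along the closed-loop trajectory. The PSS-CLF condition, however, treats $\bs{\delta}$ as if it were such an input, so I must invoke the assumption following (\ref{eqn:projdyn}) that $\bs{\delta}$ is essentially bounded in time along the closed-loop trajectory to justify taking the supremum $\sup_{\tau \geq 0}\norm{\bs{\delta}(\tau)}$ and applying the ISS-style comparison argument pointwise in time. A secondary subtlety is that the PSS-CLF conditions may only hold on a forward invariant subset containing $\mb{0}$; as in the discussion preceding Lemma \ref{lem:issomega}, I would restrict the analysis to an appropriate sublevel set of $W$ and use Nagumo's theorem to ensure the trajectory never leaves the domain on which the derivative inequality is valid.
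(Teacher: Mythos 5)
Your proposal follows essentially the same route as the paper: weaken the PSS-CLF conditions to the standard ISS-Lyapunov form on the projected system, conclude ISS with $\bs{\delta}$ viewed as the disturbance (you unpack the comparison-lemma argument that the paper cites as a black box), then pull the bound on $\norm{\bs{\Pi}(\mb{x}(t))}$ back through the sandwich (\ref{eqn:picond}) and split the result with the weak triangle inequality. The only slip is that the ultimate bound from the comparison argument should be $\overline{\alpha}\bigl(\rho\bigl(\sup_{\tau\geq 0}\norm{\bs{\delta}(\tau)}\bigr)\bigr)$ rather than $\underline{\alpha}(\rho(\cdot))$, since $W(\bs{\Pi}(\mb{x}))\geq\overline{\alpha}\circ\rho(\norm{\bs{\delta}})$ is what guarantees $\norm{\bs{\Pi}(\mb{x})}\geq\rho(\norm{\bs{\delta}})$; this does not affect the structure or the conclusion.
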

    
\begin{proof}
    The bounds in (\ref{eqn:newconds}) can be weakened to:
    \begin{align}\label{eqn:newnewconds}
        &W(\bs{\Pi}(\mb{x})) \geq \overline{\alpha}\circ\rho(\norm{\bs{\delta}}) \nonumber\\
        & \qquad \implies \inf_{\mb{u} \in \cal{U}} \dot{W}(\mb{x}, \mb{u}, \bs{\delta}) \leq -\alpha \circ \overline{\alpha}^{-1}(W (\bs{\Pi}(\mb{x}))).
    \end{align}
    That is, if (\ref{eqn:newnewconds}) holds, (\ref{eqn:newconds}) holds. Therefore, a choice of state-feedback controller exists such that the system governed by (\ref{eqn:projdyn}) is Input to State Stable (ISS) with $\bs{\delta}$ viewed a disturbance. This implies that there exist $\beta \in \cal{KL}_{\infty}$ and $\gamma \in \cal{K}_{\infty}$ such that:
    \begin{equation}
        \norm{\bs{\Pi}(\mb{x}(t))} \leq \beta(\norm{\bs{\Pi}(\mb{x}(0))}, t) + \gamma\left( \sup_{\tau \geq 0} \norm{\bs{\delta}(\tau)} \right),
    \end{equation}
    for all $t \geq 0$. Since $\bs{\Pi}$ satisfies (\ref{eqn:picond}) we have:
    \begin{align}
        &\norm{\mb{x}(t)} \leq \underline{\sigma}^{-1}\left(\beta(\overline{\sigma}(\norm{\mb{x}(0)}), t) + \gamma\left( \sup_{\tau \geq 0} \norm{\bs{\delta}(\tau)} \right) \right).
    \end{align}
    Finally, define $\beta' \in \cal{KL}_{\infty}$ and $\gamma' \in \cal{K}_{\infty}$ as:
    \begin{align}
        \beta'(r, s) &= \underline{\sigma}^{-1}(2\beta(\overline{\sigma}(r), s))\\
        \gamma'(r) &= \underline{\sigma}^{-1}(2\gamma(r)).
    \end{align}
    From the weak form of the triangle inequality presented in \cite{sontag1989smooth}, \cite{kellett2014compendium}, it follows that:
    \begin{equation}
        \norm{\mb{x}(t)} \leq \beta'(\norm{\mb{x}(0)}, t) + \gamma'\left( \sup_{\tau \geq 0} \norm{\bs{\delta}(\tau)} \right).
    \end{equation}
\end{proof}

We next show that a CLF $V$ for the undisturbed dynamics of a system can be viewed as a projection, thus yielding a PSS-CLF that certifies PSS with respect to $V$.
    
\begin{corollary}\label{cor:clftopss}
    Suppose $V: \cal{X} \to \R_+$ is a CLF on $\cal{X}$ for the system
    $ \dot{\mb{x}} = \mb{f}(\mb{x}, \mb{u}, \mb{0}).$
    Then the disturbed system governed by (\ref{eq:distsys}) is PSS with respect to the projection $V$.
\end{corollary}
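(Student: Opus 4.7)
My plan is to reduce the claim directly to Theorem~\ref{thm:equivalence} by recognizing $V$ itself as a dynamic projection and exhibiting a trivial PSS-CLF for the induced projected dynamics. The CLF axioms supply everything needed with very little additional work.

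First, I would verify that $V$ qualifies as a dynamic projection with $k=1$. Since $V$ is continuously differentiable with $V(\mb{x}) \geq 0$, the norm $\norm{V(\mb{x})}$ coincides with $V(\mb{x})$, so the sandwich bound in Definition~\ref{def:clf} supplies (\ref{eqn:picond}) directly with $\underline{\sigma} = \underline{\alpha}$ and $\overline{\sigma} = \overline{\alpha}$. The projected dynamics (\ref{eqn:projdyn}) then take the scalar form $\dot{y} = \grad V(\mb{x})^{\top}\mb{f}(\mb{x},\mb{u},\mb{0}) + \bs{\delta}$, where $\bs{\delta}$ is the scalar mismatch $\grad V(\mb{x})^{\top}(\mb{f}(\mb{x},\mb{u},\mb{d}) - \mb{f}(\mb{x},\mb{u},\mb{0}))$.

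Next, I would propose the identity $W(y) = y$ as a PSS-CLF for the projected system. The sandwich in (\ref{eqn:newconds}) holds trivially with identity $\cal{K}_{\infty}$ functions. For the derivative condition, combining the CLF decay $\inf_{\mb{u}} \grad V(\mb{x})^{\top} \mb{f}(\mb{x},\mb{u},\mb{0}) \leq -\alpha(\norm{\mb{x}})$ with the CLF upper bound $V(\mb{x}) \leq \overline{\alpha}(\norm{\mb{x}})$ gives
\begin{equation*}
    \inf_{\mb{u} \in \cal{U}} \dot{W}(\mb{x},\mb{u},\bs{\delta}) \leq -(\alpha \circ \overline{\alpha}^{-1})(V(\mb{x})) + \norm{\bs{\delta}}.
\end{equation*}
Choosing $\rho(r) = (\overline{\alpha}\circ\alpha^{-1})(2r)$ ensures that $\norm{\bs{\delta}}$ is at most half of the decay term whenever $\norm{\bs{\Pi}(\mb{x})} = V(\mb{x}) \geq \rho(\norm{\bs{\delta}})$, yielding the implication in (\ref{eqn:newconds}) with $\alpha' = \tfrac{1}{2}(\alpha\circ\overline{\alpha}^{-1})$. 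An application of Theorem~\ref{thm:equivalence} to the projected system then concludes that (\ref{eq:distsys}) is PSS with respect to $V$.

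The main obstacle is this margin step: the CLF's decay is naturally expressed in $\norm{\mb{x}}$, whereas the PSS-CLF implication must be expressed in $V(\mb{x})$ and must dominate the signed scalar disturbance $\bs{\delta}$ uniformly. This is essentially the same weakening trick used inside the proof of Theorem~\ref{thm:equivalence}, so no genuinely new technique is required beyond careful bookkeeping with the class $\cal{K}_{\infty}$ functions and their compositions.
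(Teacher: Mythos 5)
Your proposal is correct and follows essentially the same route as the paper: treat $V$ as the dynamic projection (the CLF sandwich gives (\ref{eqn:picond})), show the identity on $\R_+$ is a PSS-CLF for the scalar projected dynamics, and invoke Theorem~\ref{thm:equivalence}. Your factor-of-two margin with $\rho(r) = \overline{\alpha}\circ\alpha^{-1}(2r)$ is just the special case $\alpha_p = \alpha_q = \tfrac{1}{2}\alpha$ of the paper's split $\alpha = \alpha_p + \alpha_q$, so the two arguments coincide up to bookkeeping.
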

    
\begin{proof}
    With the projection $V$ we have that:
    \begin{equation}
        \delta = \grad V(\mb{x})^\top(\mb{f}(\mb{x}, \mb{u}, \mb{d}) - \mb{f}(\mb{x}, \mb{u}, \mb{0})).
    \end{equation}
    where  $\grad V: \cal{X} \to \R^n$ is the gradient of the Lyapunov function.
    The projected system is governed by:
    \begin{equation}\label{eqn:deltaderiv}
        \dot{V}(\mb{x},\mb{u},\delta) =  \grad V(\mb{x})^\top\mb{f}(\mb{x}, \mb{u}, \mb{0}) + \delta,    
    \end{equation}
    Since $V$ is a CLF,
    there exists a state-feedback controller $\mb{k}: \cal{X} \to \cal{U}$ satisfying:
    \begin{equation}
        \dot{V}(\mb{x}, \mb{k}(\mb{x}), 0) \leq -\alpha(\norm{\mb{x}}),
    \end{equation}
    for all $\mb{x} \in \cal{X}$. Let $\alpha_p, \alpha_q \in \cal{K}_{\infty}$ satisfy $\alpha_p + \alpha_q = \mb{\alpha}$. Then:
    \begin{align}
        \dot{V}(\mb{x}, \mb{k}(\mb{x}), \delta) &\leq -\alpha(\norm{\mb{x}}) + \delta  \nonumber\\
        &\leq -\alpha_p(\norm{\mb{x}}) - \alpha_q(\norm{\mb{x}}) + \vert \delta \vert.
    \end{align}
    Therefore:
    \begin{equation}
        \norm{\mb{x}} \geq \alpha_q^{-1}(\vert \delta \vert) \implies \dot{V}(\mb{x}, \mb{k}(\mb{x}), \delta) \leq -\alpha_p(\norm{\mb{x}}).
    \end{equation}
    Since $V$ is a CLF we may weaken the bounds as in the proof of Theorem \ref{thm:equivalence} to:
    \begin{align}
        &V(\mb{x}) \geq \overline{\alpha} \circ \alpha_q^{-1}(\vert\delta\vert) \nonumber \\
        & \qquad \implies \dot{V}(\mb{x}, \mb{k}(\mb{x}), \delta) \leq -\alpha_p \circ \overline{\alpha}^{-1}(V(\mb{x})),
    \end{align}
    noting that $\overline{\alpha} \circ \alpha_q^{-1}$ and $\alpha_p \circ \overline{\alpha}^{-1}$ are class $\cal{K}_{\infty}$. It follows from Definition \ref{def:SS-CLF} that the identity map on $\R_+$ is a PSS-CLF for (\ref{eqn:deltaderiv}). Therefore, the system (\ref{eq:distsys}) is PSS with respect to the projection $V$ by Theorem \ref{thm:equivalence}.
\end{proof}

\section{UNCERTAINTY MODELING \& ANALYSIS}
\label{sec:modeling}
In this section we consider a structured form of uncertainty present in affine control systems. We analyze the impact of this uncertainty on a Lyapunov function derivative, and on the PSS behavior of the system.

\subsection{Uncertain Affine Systems}

We consider affine control systems of the form:
\begin{equation}\label{eq:truesys}
    \dot{\mb{x}} = \mb{f}(\mb{x}) + \mb{g}(\mb{x})\mb{u},
\end{equation}
with drift dynamics $\mb{f}: \cal{X} \to \R^n$ and actuation matrix $\mb{g}: \cal{X} \to \R^{n \times m}$. If $\mb{f}$ and $\mb{g}$ are unknown, we may consider an estimated model of the system:
\begin{equation}\label{eq:nomsys}
    \dot{\mb{x}} = \hat{\mb{f}}(\mb{x}) + \hat{\mb{g}}(\mb{x})\mb{u},
\end{equation}
where $\hat{\mb{f}}: \cal{X} \to \R^n$ and  $\hat{\mb{g}}: \cal{X} \to \R^{n \times m}$ are estimates of $\mb{f}$ and $\mb{g}$, respectively. In this case, (\ref{eq:truesys}) can be expressed as:
\begin{equation}\label{eq:distaffsys}
    \dot{\mb{x}} = \hat{\mb{f}}(\mb{x}) + \hat{\mb{g}}(\mb{x})\mb{u} + \overbrace{ (\underbrace{\mb{g}(\mb{x}) - \hat{\mb{g}}(\mb{x})}_{\mb{A}(\mb{x})})\mb{u} + \underbrace{\mb{f}(\mb{x}) - \hat{\mb{f}}(\mb{x})}_{\mb{b}(\mb{x})}}^{\mb{d}},
\end{equation}
obtaining a representation of the dynamics as in (\ref{eqn:affdistrep}). Note that the disturbance $\mb{d} = \mb{A}(\mb{x})\mb{u} + \mb{b}(\mb{x})$ is explicitly characterized as time-invariant, state-dependent, and input-dependent, with potentially unknown $\mb{A}(\mb{x})$ and $\mb{b}(\mb{x})$ for all $\mb{x} \in \cal{X}$.

As discussed in \cite{ames2014rapidly}, \cite{taylor2019episodic}, CLFs may be constructively formed for affine systems under proper assumptions regarding relative degree and unbounded control. Furthermore, if the true system satisfies the relative degree properties of the estimated model, then the CLF found for the estimated system can be used for the true system. 

Assume $\mb{f}$, $\mb{g}$, $\hat{\mb{f}}$, and $\hat{\mb{g}}$ are Lipschitz continuous (implying $\mb{A}$ and $\mb{b}$ are Lipschitz continuous), and let $V$ be a CLF candidate for \eqref{eq:nomsys}. The time derivative of $V$ is given by:
\begin{align}\label{eq:lyapder}
    \dot{V}(\mb{x}, \mb{u}, \mb{d}) =& \overbrace{( \hat{\mb{f}}(\mb{x}) + \hat{\mb{g}}(\mb{x})\mb{u} )^\top \grad{V}(\mb{x})}^{\hat{\dot{V}}(\mb{x}, \mb{u})} \nonumber\\
    & + (\underbrace{\mb{A}(\mb{x})^\top\grad V(\mb{x})}_{\mb{a}(\mb{x})})^\top\mb{u} + \underbrace{\mb{b}(\mb{x})^\top\grad V(\mb{x})}_{b(\mb{x})},
\end{align}
for all $\mb{x} \in \cal{X}$ and $\mb{u} \in \cal{U}$. As proposed in \cite{taylor2019episodic}, we may wish to reduce the estimation error $\vert \dot{V}-\hat{\dot{V}} \vert$ by improving $\hat{\dot{V}}$ with estimates of $\mb{a}$ and $b$. Given continuous estimators $\hat{\mb{a}}: \cal{X} \to \R^m$ and $\hat{b}: \cal{X} \to \R$, \eqref{eq:lyapder} may be reformulated as:
\begin{align}\label{eq:lyapderest}
    & \dot{V}(\mb{x}, \mb{u}, \mb{d}) = \overbrace{( \hat{\mb{f}}(\mb{x}) + \hat{\mb{g}}(\mb{x})\mb{u} )^\top \grad{V}(\mb{x}) + \hat{\mb{a}}(\mb{x})^\top\mb{u} + \hat{b}(\mb{x})}^{\hat{\dot{V}}(\mb{x}, \mb{u})} \nonumber\\
    & \quad + (\underbrace{\mb{A}(\mb{x})^\top\grad V(\mb{x}) - \hat{\mb{a}}(\mb{x})}_{\mb{a}(\mb{x})})^\top\mb{u} + \underbrace{\mb{b}(\mb{x})^\top\grad V(\mb{x}) - \hat{b}(\mb{x})}_{b(\mb{x})},
\end{align}
for all $\mb{x} \in \cal{X}$ and $\mb{u} \in \cal{U}$.

Both formulations decompose $\dot{V}$ into an estimated component, $\hat{\dot{V}}$, and a residual component. In (\ref{eq:lyapder}) the residual terms $\mb{a}$ and $b$ capture the effect of the unmodeled dynamics on the Lyapunov function derivative. In \eqref{eq:lyapderest} the residual terms reflect the error in estimating this effect. Additionally, viewing $V$ as a projection results in $\delta = \mb{a}(\mb{x})^\top\mb{u}+b(\mb{x})$. 

\subsection{Projection  to State Stability via Uncertainty Functions}
If knowledge on what values $\mb{a}$ and $b$ can assume is available, the impact on the Lyapunov derivative can be constrained in a manner permitting PSS analysis of a system. Therefore, we define a function characterizing the possible uncertainties at a given state.

\begin{definition}[\it{Uncertainty Function}]
    Let $\cal{P}(\R^m \times \R)$ denote the set of all subsets of $\R^m \times \R$. An uncertainty function for (\ref{eq:lyapder}) or (\ref{eq:lyapderest}) is a function $\Delta: \cal{X} \to \cal{P}(\R^m \times \R)$ with $\Delta(\mb{x})$ bounded and satisfying $(\mb{a}(\mb{x}), b(\mb{x})) \in \Delta(\mb{x})$ for all $\mb{x} \in \cal{X}$.
\end{definition}

For a given $\mb{x} \in \cal{X}$, we refer to $\Delta(\mb{x})$ as an uncertainty set. Suppose there exists a valid uncertainty function $\Delta$ for (\ref{eq:lyapder}) or  \eqref{eq:lyapderest}. Then $V$ satisfies:
\begin{equation}\label{eq:lyapbound}
    \dot{V}(\mb{x}, \mb{u}, \delta) \leq \hat{\dot{V}}(\mb{x}, \mb{u}) + \sup_{(\mb{a}, b) \in \Delta(\mb{x})} (\mb{a}^\top \mb{u} + b),
\end{equation}
for all $\mb{x} \in \cal{X}$ and $\mb{u} \in \cal{U}$. One major challenge is to define a $\Delta$ that is non-vacuous and thus practically relevant.  
From this point forward we limit our attention to a subset of the state space and make a critical assumption regarding the estimate $\hat{\dot{V}}$ for a CLF $V$.
\begin{assumption}\label{as:clfdercond}
    Let $V$ be a CLF for the system governed by (\ref{eq:nomsys}) on a subset $\cal{C} \subseteq \cal{X}$ with $\mb{0} \in \cal{C}$. We assume that:
    \begin{equation}\label{eqn:asscond}
        \inf_{\mb{u} \in \cal{U}} \hat{\dot{V}}(\mb{x}, \mb{u}) \leq -\alpha(\norm{\mb{x}}).
    \end{equation}
    for all $\mb{x} \in \cal{C}$. If $\hat{\dot{V}}$ is specified as in (\ref{eq:lyapder}), then this assumption is satisfied by definition. If $\hat{\dot{V}}$ is specified as in (\ref{eq:lyapderest}), then this assumption states that the addition of the estimators $\hat{\mb{a}}$ and $\hat{b}$ does not make it impossible to choose a control input such that (\ref{eqn:asscond}) is satisfied.
\end{assumption}
If the estimated and true system satisfy the same relative degree property, then this assumption amounts to the addition of estimates $\hat{\mb{a}}$ and $\hat{b}$ not violating the relative degree property.

\begin{assumption}
    Let $\mb{A}$ and $\mb{b}$ be defined as in (\ref{eq:distaffsys}), and let $\cal{C}$ be defined as in Assumption \ref{as:clfdercond}. We assume $\mb{A}$ and $\mb{b}$ are bounded on $\cal{C}$.
\end{assumption}
If $\cal{C}$ is compact, this assumption is automatically satisfied since $\mb{A}$ and $\mb{b}$ are assumed continuous. Under Assumption \ref{as:clfdercond}, the set of admissible control inputs $\cal{U}(\mb{x})$:
\begin{equation}\label{eq:adcontrol}
    \cal{U}(\mb{x}) = \{ \mb{u} \in \cal{U} : \hat{\dot{V}}(\mb{x}, \mb{u}) \leq -\alpha(\norm{ \mb{x} }) \},
\end{equation}
is non-empty, for all $\mb{x} \in \cal{C}$. Then the CLF $V$  satisfies:
\begin{align}
    \underline{\alpha}(\norm{ \mb{x} }) \leq V(\mb{x}) &\leq \overline{\alpha}(\norm{\mb{x}})\nonumber\\
    \inf_{\mb{u} \in \cal{U}(\mb{x})} \dot{V}(\mb{x}, \mb{u}, \delta) - \sup_{(\mb{a}, b) \in \Delta(\mb{x})} (\mb{a}^\top \mb{u} + b) &\leq -\alpha(\norm{\mb{x}}),
\end{align}
for all $\mb{x} \in \cal{C}$. We now develop sufficient conditions on the uncertainty function that certifies  \eqref{eq:distaffsys} as PSS with respect to the CLF $V$ (with $V$ interpreted as a projection).

\begin{theorem}[\it{Sufficient Conditions for PSS in Affine Control Systems}]\label{thm:isscond}
    Consider the system in (\ref{eq:distaffsys}), and a CLF $V$ for (\ref{eq:nomsys}) with estimated time-derivative $\hat{\dot{V}}$ as defined in (\ref{eq:lyapder}) or (\ref{eq:lyapderest}), satisfying Assumption \ref{as:clfdercond}. Let $\Delta$ be an uncertainty function and let $\mb{k}: \cal{X} \to \cal{U}$ be a state-feedback controller satisfying $\mb{k}(\mb{x}) \in \cal{U}(\mb{x})$ for all $\mb{x} \in \cal{C}$, with $\cal{U}(\mb{x})$ defined as in (\ref{eq:adcontrol}). Suppose there exists $\alpha_p, \alpha_q \in \cal{K}_{\infty}$ with $\alpha_p + \alpha_q = \alpha$ and a sublevel set $\Omega \subseteq \cal{C}$ of $V$ satisfying:
    \begin{equation}\label{eq:boundarycond}
        \norm{\mb{x}} \geq \sup_{(\mb{a}, b) \in \Delta(\mb{x})} \alpha_q^{-1}(\mb{a}^\top\mb{k}(\mb{x}) + b),
    \end{equation}
    for all $\mb{x} \in \partial\Omega$. Then the system governed by (\ref{eq:distaffsys}) is PSS with respect to the projection $V$ on $\Omega$.
\end{theorem}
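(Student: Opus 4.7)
The plan is to separate the argument into two stages: first establish that the sublevel set $\Omega$ is forward invariant under $\mb{k}$, and then treat the problem as an instance of Corollary~\ref{cor:clftopss} restricted to $\Omega$, invoking Theorem~\ref{thm:equivalence} to package the PSS bound.

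For the invariance stage, I would analyze $\dot{V}$ along $\mb{k}$ at boundary points. Since $\mb{k}(\mb{x})\in\cal{U}(\mb{x})$, Assumption~\ref{as:clfdercond} gives $\hat{\dot{V}}(\mb{x},\mb{k}(\mb{x}))\leq -\alpha(\norm{\mb{x}})$, and the decomposition in \eqref{eq:lyapder}/\eqref{eq:lyapderest} together with $(\mb{a}(\mb{x}),b(\mb{x}))\in\Delta(\mb{x})$ yields
\begin{equation*}
\dot{V}(\mb{x},\mb{k}(\mb{x}),\delta)\leq -\alpha(\norm{\mb{x}})+\sup_{(\mb{a},b)\in\Delta(\mb{x})}(\mb{a}^\top\mb{k}(\mb{x})+b).
\end{equation*}
Splitting $\alpha=\alpha_p+\alpha_q$ and using that $\alpha_q^{-1}\in\cal{K}_\infty$ is monotone (so the supremum commutes with $\alpha_q^{-1}$), the boundary hypothesis \eqref{eq:boundarycond} becomes $\alpha_q(\norm{\mb{x}})\geq \sup_{(\mb{a},b)\in\Delta(\mb{x})}(\mb{a}^\top\mb{k}(\mb{x})+b)$ on $\partial\Omega$. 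Hence $\dot{V}(\mb{x},\mb{k}(\mb{x}),\delta)\leq -\alpha_p(\norm{\mb{x}})\leq 0$ there. Nagumo's theorem, applied exactly as in the proof of Lemma~\ref{lem:issomega}, then establishes that $\Omega$ is forward invariant under $\mb{k}$.

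With invariance in hand, I would promote $V$ itself to a PSS-CLF via the projection viewpoint of Corollary~\ref{cor:clftopss}, so that $\delta=\mb{a}(\mb{x})^\top\mb{k}(\mb{x})+b(\mb{x})$ and the projected dynamics are given by \eqref{eqn:deltaderiv}. For any $\mb{x}\in\Omega\subseteq\cal{C}$, Assumption~\ref{as:clfdercond} again gives
\begin{equation*}
\dot{V}(\mb{x},\mb{k}(\mb{x}),\delta)\leq -\alpha_p(\norm{\mb{x}})-\alpha_q(\norm{\mb{x}})+|\delta|,
\end{equation*}
so $\norm{\mb{x}}\geq \alpha_q^{-1}(|\delta|)$ implies $\dot{V}\leq -\alpha_p(\norm{\mb{x}})$. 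Weakening as in the proofs of Theorem~\ref{thm:equivalence} and Corollary~\ref{cor:clftopss} converts this to $V(\mb{x})\geq \overline{\alpha}\circ\alpha_q^{-1}(|\delta|)\implies \dot{V}\leq -\alpha_p\circ\overline{\alpha}^{-1}(V(\mb{x}))$, which is exactly the PSS-CLF form \eqref{eqn:newnewconds} for the identity projection on the one-dimensional projected system. Applying Theorem~\ref{thm:equivalence} on the invariant set $\Omega$ then delivers a bound of the form $\norm{\mb{x}(t)}\leq \beta(\norm{\mb{x}(0)},t)+\gamma(\sup_{\tau\geq 0}|\delta(\tau)|)$, which is precisely PSS with respect to the projection $V$ on $\Omega$.

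The main obstacle I anticipate is the bookkeeping required by the fact that the hypothesis \eqref{eq:boundarycond} is only pointwise on $\partial\Omega$, whereas the PSS-CLF mechanism needs an inequality valid throughout an invariant set. The resolution is to give the boundary condition a single structural job (invariance, via Nagumo), after which the CLF inequality $\hat{\dot{V}}\leq -\alpha(\norm{\mb{x}})$ that holds on all of $\cal{C}\supseteq\Omega$ supplies the interior estimate through the $\alpha_p,\alpha_q$ split. A secondary care point is ensuring $\delta$ is essentially bounded in time along closed-loop trajectories so that Theorem~\ref{thm:equivalence} is applicable; this follows from continuity of $\mb{A}$, $\mb{b}$, $\mb{k}$, and $\grad V$ together with boundedness of $\Omega$ implied by $V$ being proper on the closed sublevel set.
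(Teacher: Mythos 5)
Your proposal is correct and follows essentially the same route as the paper: establish forward invariance of $\Omega$ by combining the boundary condition \eqref{eq:boundarycond} with the $\alpha_p,\alpha_q$ split to get $\dot{V}(\mb{x},\mb{k}(\mb{x}),\delta)\leq-\alpha_p(\norm{\mb{x}})$ on $\partial\Omega$ (Nagumo, as in Lemma~\ref{lem:issomega}), then apply Corollary~\ref{cor:clftopss} restricted to $\Omega$. Your write-up is in fact somewhat more careful than the paper's, spelling out the restriction of the Corollary~\ref{cor:clftopss} argument to $\Omega$ and the essential boundedness of $\delta$ along closed-loop trajectories, both of which the paper leaves implicit.
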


\begin{proof}
    First, note that:
    \begin{align}
        &\dot{V}(\mb{x}, \mb{k}(\mb{x}), \delta) - \sup_{(\mb{a}, b) \in \Delta(\mb{x})} (\mb{a}^\top \mb{k}(\mb{x}) + b) \leq -\alpha(\norm{\mb{x}}) \nonumber\\ 
        &\qquad\qquad\qquad\qquad\qquad\quad = -\alpha_p(\norm{\mb{x}}) - \alpha_q(\norm{\mb{x}}),
    \end{align}
    for all $\mb{x} \in \cal{C}$. Since (\ref{eq:boundarycond}) holds for all $\mb{x} \in \partial \Omega$ and $\alpha_q$ is monotonically increasing, we have:
    \begin{equation}
        \alpha_q(\norm{\mb{x}}) \geq \sup_{(\mb{a}, b) \in \Delta(\mb{x})} (\mb{a}^\top\mb{k}(\mb{x}) + b),
    \end{equation}
    for all $\mb{x} \in \partial \Omega$. It follows that:
    \begin{equation}
        \dot{V}(\mb{x}, \mb{k}(\mb{x}), \delta) \leq -\alpha_p(\norm{\mb{x}}),
    \end{equation}
    for all $\mb{x} \in \partial \Omega$. This means $\Omega$ is forward invariant, with a proof similar to that of Lemma \ref{lem:issomega}. Since $V$ is a CLF for (\ref{eq:nomsys}), Corollary \ref{cor:clftopss} can be restricted to $\Omega$; that is, the system is PSS with respect to the projection $V$ on $\Omega$. 
\end{proof}

    
    
    
    

    
We may want to study a particular set of interest $\cal{E}$ over which the impact of the uncertainty can be bounded. For $r > 0$, let $B_r$ be the open ball around $\mb{0}$ of radius $r$, typically used to define a ball contained in $\cal{E}$ in the subsequent analysis.
\begin{corollary}\label{cor:boundeddist}
    Suppose there is a set $\cal{E}$ and $\mu \geq 0$ satisfying:
    \begin{equation}
        \sup_{(\mb{a}, b) \in \Delta(\mb{x})} (\mb{a}^\top\mb{k}(\mb{x}) + b) \leq \mu,
    \end{equation}
    for all $\mb{x} \in \cal{E}$. If there exists a sublevel set $\Omega$ of $V$ such that:
    \begin{equation}
        B_{\alpha_q^{-1}(\mu)} \subseteq \Omega \subseteq \cal{C} \intersect \cal{E},
    \end{equation}
    then the system is PSS with respect to the (CLF) projection $V$ on $\Omega$, and the smallest sublevel set of $V$ containing $B_{\alpha_q^{-1}(\mu)}$ is asymptotically stable.
\end{corollary}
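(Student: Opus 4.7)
The plan is to derive both conclusions directly from Theorem \ref{thm:isscond}, the only real work being to verify the boundary hypothesis (\ref{eq:boundarycond}) and then to upgrade the resulting PSS guarantee into asymptotic stability of the designated sublevel set.

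First I would check (\ref{eq:boundarycond}) at an arbitrary $\mb{x} \in \partial\Omega$. Since $\partial\Omega \subseteq \Omega \subseteq \cal{C} \intersect \cal{E}$, the hypothesis gives $\sup_{(\mb{a},b)\in\Delta(\mb{x})}(\mb{a}^\top\mb{k}(\mb{x})+b) \leq \mu$, and applying the monotone $\alpha_q^{-1}$ yields $\sup_{(\mb{a},b)\in\Delta(\mb{x})} \alpha_q^{-1}(\mb{a}^\top\mb{k}(\mb{x})+b) \leq \alpha_q^{-1}(\mu)$. On the other hand, every $\mb{x} \in \partial\Omega$ satisfies $\norm{\mb{x}} \geq \alpha_q^{-1}(\mu)$; otherwise $\mb{x}$ would lie in the open ball $B_{\alpha_q^{-1}(\mu)} \subseteq \Omega$ and hence in the interior of $\Omega$, contradicting $\mb{x}\in\partial\Omega$. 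Chaining the two bounds gives (\ref{eq:boundarycond}), so Theorem \ref{thm:isscond} applies and yields PSS with respect to the projection $V$ on $\Omega$.

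For the asymptotic stability claim, let $\Omega_\star$ denote the smallest sublevel set of $V$ containing $B_{\alpha_q^{-1}(\mu)}$. The same ``a boundary point cannot lie inside an open ball contained in the set'' argument, applied to $\Omega_\star$, gives $\norm{\mb{x}} \geq \alpha_q^{-1}(\mu)$ for every $\mb{x} \notin \Omega_\star$. Plugging this into the Lyapunov inequality $\dot{V}(\mb{x},\mb{k}(\mb{x}),\delta) \leq -\alpha(\norm{\mb{x}}) + \sup_{(\mb{a},b)\in\Delta(\mb{x})}(\mb{a}^\top\mb{k}(\mb{x})+b)$ that holds throughout $\Omega$, decomposing $\alpha = \alpha_p + \alpha_q$, and absorbing the uncertainty term against $\alpha_q(\norm{\mb{x}}) \geq \mu$ leaves $\dot{V}(\mb{x},\mb{k}(\mb{x}),\delta) \leq -\alpha_p(\norm{\mb{x}})$ on $\Omega\setminus\Omega_\star$. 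Since $\mb{0} \in B_{\alpha_q^{-1}(\mu)} \subseteq \Omega_\star$ lies in its interior, every $\mb{x} \in \partial\Omega_\star$ is nonzero and $\dot{V}$ is strictly negative there; a Nagumo argument as in Lemma \ref{lem:issomega} then makes $\Omega_\star$ forward invariant, and the uniform strict decrease on the annular region $\Omega \setminus \Omega_\star$ asymptotically drives every trajectory starting in $\Omega$ into $\Omega_\star$.

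The only subtlety I anticipate is the small geometric fact that the ball containment $B_{\alpha_q^{-1}(\mu)} \subseteq \Omega_\star \subseteq \Omega$ excludes boundary points of norm below $\alpha_q^{-1}(\mu)$; once that observation is in hand, both the PSS statement and the asymptotic stability statement reduce to direct substitution into Theorem \ref{thm:isscond} and the basic Lyapunov bound.
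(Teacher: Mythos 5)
Your proposal is correct and follows essentially the same route as the paper's (much terser) proof: verify the boundary condition \eqref{eq:boundarycond} by combining $\norm{\mb{x}} \geq \alpha_q^{-1}(\mu)$ on $\partial\Omega$ (from the ball containment) with monotonicity of $\alpha_q^{-1}$, invoke Theorem \ref{thm:isscond} for PSS, and then use $\dot{V} \leq -\alpha_p(\norm{\mb{x}})$ outside the smallest sublevel set containing $B_{\alpha_q^{-1}(\mu)}$ for asymptotic stability. Your write-up merely supplies the geometric and Nagumo-type details that the paper leaves implicit.
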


\begin{proof}
    First, note that:
    \begin{equation}
        \norm{\mb{x}} \geq \alpha_q^{-1}(\mu) \geq \sup_{(\mb{a}, b) \in \Delta(\mb{x})} \alpha_q^{-1}(\mb{a}^\top\mb{k}(\mb{x}) + b),
    \end{equation}
    for all $\mb{x} \in \partial \Omega$, and the system is PSS on $\Omega$ by Theorem \ref{thm:isscond}. The smallest sublevel set of $V$ containing $B_{\alpha_q^{-1}(\mu)}$ is asymptotically stable since:
    \begin{equation}
        \norm{\mb{x}} \geq \alpha_q^{-1}(\mu) \geq \implies \dot{V}(\mb{x}, \mb{k}(\mb{x}), \delta) \leq -\alpha_p(\norm{\mb{x}}).
    \end{equation}
\end{proof}

Improving the uncertainty set (e.g., reducing uncertainty using learning) directly leads to larger sets for a given bound, or tighter bounds on a given set.  We state this formally in the next result.

\begin{corollary}[\it{Uncertainty Function Improvement}]
    \label{cor:usimprove}
    Consider uncertainty functions $\Delta$ and $\Delta'$, as well as $\cal{E}$ and $\mu$ as defined in Corollary \ref{cor:boundeddist}.
    \begin{itemize}
        \item Fix $\mu > 0$ and let $\cal{E}_{\mu}$ be defined as:
        \begin{equation}
            \cal{E}_{\mu} = \{ \mb{x} \in \cal{X} : \sup_{(\mb{a}, b) \in \Delta(\mb{x})} (\mb{a}^\top\mb{k}(\mb{x}) + b) \leq \mu \}.
        \end{equation} 
        \item Fix $\cal{E} \subseteq \cal{X}$ and let $\mu_{\cal{E}}$ be defined as:
        \begin{equation}
            \mu_{\cal{E}} = \sup_{\vphantom{(\mb{a}, b) \in \Delta(\mb{x})}\mb{x} \in \cal{E}} \sup_{(\mb{a}, b) \in \Delta(\mb{x})} (\mb{a}^\top\mb{k}(\mb{x}) + b).
        \end{equation}
    \end{itemize}
    Suppose $\Delta'(\mb{x}) \subseteq \Delta(\mb{x})$ for all $\mb{x} \in \cal{X}$. Then the associated set $\cal{E}_{\mu}'$ and scalar $\mu_{\cal{E}}'$ satisfy $\cal{E}_{\mu} \subseteq \cal{E}_{\mu}'$ and $\mu_{\cal{E}}' \leq \mu_{\cal{E}}$.
\end{corollary}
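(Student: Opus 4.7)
The plan is to prove both claims by appealing to monotonicity of the supremum with respect to set inclusion. The hypothesis $\Delta'(\mb{x}) \subseteq \Delta(\mb{x})$ means that every pair $(\mb{a},b)$ realizable under the refined uncertainty function is also realizable under the original one, so taking a supremum over the smaller set can only shrink (or preserve) its value. Concretely, the key pointwise inequality I would establish first is
\begin{equation*}
    \sup_{(\mb{a},b) \in \Delta'(\mb{x})} \bigl(\mb{a}^\top \mb{k}(\mb{x}) + b\bigr) \;\leq\; \sup_{(\mb{a},b) \in \Delta(\mb{x})} \bigl(\mb{a}^\top \mb{k}(\mb{x}) + b\bigr),
\end{equation*}
valid for every $\mb{x} \in \cal{X}$. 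Both suprema are finite since $\Delta$ and $\Delta'$ take values in bounded subsets of $\R^m \times \R$ by the definition of an uncertainty function.

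For the first claim, I would take an arbitrary $\mb{x} \in \cal{E}_{\mu}$ and chain the pointwise inequality with the defining bound $\sup_{(\mb{a},b) \in \Delta(\mb{x})}(\mb{a}^\top\mb{k}(\mb{x}) + b) \leq \mu$ to conclude $\sup_{(\mb{a},b) \in \Delta'(\mb{x})}(\mb{a}^\top\mb{k}(\mb{x}) + b) \leq \mu$, which by the definition of $\cal{E}_{\mu}'$ (applied with $\Delta'$ in place of $\Delta$) gives $\mb{x} \in \cal{E}_{\mu}'$. Since $\mb{x}$ was arbitrary, $\cal{E}_{\mu} \subseteq \cal{E}_{\mu}'$.

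For the second claim, the pointwise inequality holds at every $\mb{x} \in \cal{E}$. Taking the supremum over $\mb{x} \in \cal{E}$ on both sides preserves the inequality because the supremum is monotone in the pointwise order of the integrand, yielding
\begin{equation*}
    \mu_{\cal{E}}' \;=\; \sup_{\mb{x} \in \cal{E}} \sup_{(\mb{a},b) \in \Delta'(\mb{x})} \bigl(\mb{a}^\top \mb{k}(\mb{x}) + b\bigr) \;\leq\; \sup_{\mb{x} \in \cal{E}} \sup_{(\mb{a},b) \in \Delta(\mb{x})} \bigl(\mb{a}^\top \mb{k}(\mb{x}) + b\bigr) \;=\; \mu_{\cal{E}}.
\end{equation*}

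There is no real obstacle here: the entire result is a direct consequence of monotonicity of the supremum under set inclusion, together with the boundedness of uncertainty sets that ensures all suprema are well defined. The only point worth flagging is that the feedback controller $\mb{k}$ is the \emph{same} in both expressions, so improving the uncertainty function does not change the quantity being bounded, only the set over which it is maximized; this is what makes the monotonicity argument immediate.
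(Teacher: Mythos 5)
Your proof is correct and follows essentially the same route as the paper: the paper's entire argument is the single pointwise inequality $\sup_{(\mb{a},b)\in\Delta'(\mb{x})}(\mb{a}^\top\mb{k}(\mb{x})+b)\leq\sup_{(\mb{a},b)\in\Delta(\mb{x})}(\mb{a}^\top\mb{k}(\mb{x})+b)$, from which both claims follow by monotonicity of the supremum, exactly as you argue. Your write-up just spells out the two consequences more explicitly.
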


\begin{proof}
    \begin{equation}
        \sup_{(\mb{a}, b) \in \Delta'(\mb{x})} (\mb{a}^\top\mb{k}(\mb{x}) + b) \leq \sup_{(\mb{a}, b) \in \Delta(\mb{x})} (\mb{a}^\top\mb{k}(\mb{x}) + b).
        \vspace{-0.1in}
    \end{equation}
\end{proof}
\subsection{Uncertainty Function Construction}
\label{sec:construction}
We now provide a constructive method for creating an uncertainty function from a dataset of of state and control values generated by a system. Assume $\mb{A}$ and $\mb{b}$ are Lipschitz continuous with constants $L_{\mb{A}}$ and $L_{\mb{b}}$, respectively. Additionally, assume that $\mb{A}$ and $\mb{b}$ are bounded on $\cal{C}$ by constants $\norm{\mb{A}}_{\infty}$ and $\norm{\mb{b}}_{\infty}$, respectively. Consider a dataset $D \subseteq (\cal{X} \times \cal{U}) \times \R$ consisting of data-measurement pairs $((\mb{x}, \mb{u}), \dot{V}(\mb{x}, \mb{u}, \delta))$. Such measurements of $\dot{V}$ can be obtained through numerical differentiation of computed values of $V$. For notational convenience, let $D_0 = \{ (\mb{x}, \mb{u}): ((\mb{x}, \mb{u}), \dot{V}(\mb{x}, \mb{u}, \delta)) \in D \}$. 

\begin{proposition}\label{prop:cons}
    Given a dataset $D$, an uncertainty function $\Delta$ can be constructed as:
    \begin{align}\label{eqn:uncertaintyset}
        \Delta(\mb{x}) =&~ \{ (\mb{a}, b) \in \R^m \times \R: \pm(\mb{a}^\top\mb{u}' + b) \leq \epsilon(\mb{x}, \mb{x}', \mb{u}')  \nonumber\\ 
        & \qquad \textrm{for all } (\mb{x}', \mb{u}') \in D_0 \},
    \end{align}
    for all $\mb{x} \in \cal{X}$, where $\epsilon: \cal{X} \times \cal{X} \times \cal{U} \to \R_+$ is continuous.
\end{proposition}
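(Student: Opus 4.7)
The plan is to verify that the proposed $\Delta$ meets the two requirements of an uncertainty function: boundedness of $\Delta(\mb{x})$ and membership $(\mb{a}(\mb{x}), b(\mb{x})) \in \Delta(\mb{x})$ for every $\mb{x} \in \cal{X}$. The key leverage is that each datum $((\mb{x}', \mb{u}'), \dot{V}(\mb{x}', \mb{u}', \delta)) \in D$ reveals the scalar
$$m(\mb{x}', \mb{u}') := \dot{V}(\mb{x}', \mb{u}', \delta) - \hat{\dot{V}}(\mb{x}', \mb{u}') = \mb{a}(\mb{x}')^\top \mb{u}' + b(\mb{x}')$$
via \eqref{eq:lyapder} or \eqref{eq:lyapderest}. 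The task therefore reduces to choosing a continuous $\epsilon$ large enough to absorb the spatial variation of $\mb{a}(\cdot)^\top \mb{u}' + b(\cdot)$ between the data location $\mb{x}'$ and the query point $\mb{x}$.

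For the containment property, I would use the hypothesized Lipschitz continuity and boundedness of $\mb{A}$ and $\mb{b}$ together with continuity of $\grad V$ on the compact set $\cal{C}$ to derive explicit Lipschitz constants $L_{\mb{a}}, L_b$ for $\mb{a}(\mb{x}) = \mb{A}(\mb{x})^\top \grad V(\mb{x})$ and $b(\mb{x}) = \mb{b}(\mb{x})^\top \grad V(\mb{x})$, expressible in terms of $L_{\mb{A}}, L_{\mb{b}}, \norm{\mb{A}}_\infty, \norm{\mb{b}}_\infty$ and bounds on $\grad V$. Setting
$$\epsilon(\mb{x}, \mb{x}', \mb{u}') := |m(\mb{x}', \mb{u}')| + \bigl(L_{\mb{a}}\norm{\mb{u}'} + L_b\bigr) \norm{\mb{x} - \mb{x}'},$$
which is continuous and non-negative, the triangle inequality immediately yields $|\mb{a}(\mb{x})^\top \mb{u}' + b(\mb{x})| \leq \epsilon(\mb{x}, \mb{x}', \mb{u}')$ for every $(\mb{x}', \mb{u}') \in D_0$, so $(\mb{a}(\mb{x}), b(\mb{x}))$ lies in $\Delta(\mb{x})$.

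The main obstacle is boundedness: each datum imposes only a slab constraint in $\R^m \times \R$, and the intersection of finitely many slabs is bounded only if the homogenized input vectors $\{(\mb{u}', 1)\}_{(\mb{x}', \mb{u}') \in D_0}$ span $\R^{m+1}$. To avoid a genericity assumption on $D$, I would intersect $\Delta(\mb{x})$ with the a priori box $\{(\mb{a}, b) : \norm{\mb{a}} \leq \norm{\mb{A}}_\infty \norm{\grad V(\mb{x})},\ |b| \leq \norm{\mb{b}}_\infty \norm{\grad V(\mb{x})}\}$, which is bounded and still contains $(\mb{a}(\mb{x}), b(\mb{x}))$; equivalently, one can fold this into the construction by augmenting $D_0$ with synthetic standard-basis inputs together with the matching inflated $\epsilon$ values. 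With both boundedness and containment in hand, the construction meets the definition of an uncertainty function.
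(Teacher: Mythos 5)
Your containment argument follows the same route as the paper's proof: evaluate the observed residual $\ell(\mb{x}',\mb{u}') = \vert \mb{a}(\mb{x}')^\top\mb{u}' + b(\mb{x}')\vert$ at each data point and absorb the variation of $\mb{a}(\cdot)^\top\mb{u}' + b(\cdot)$ between $\mb{x}'$ and $\mb{x}$ into $\epsilon$. One step does not go through as written, however: you cannot extract Lipschitz constants $L_{\mb{a}}, L_b$ for $\mb{a}(\mb{x}) = \mb{A}(\mb{x})^\top\grad V(\mb{x})$ and $b(\mb{x}) = \mb{b}(\mb{x})^\top\grad V(\mb{x})$ from continuity of $\grad V$ on a compact set --- continuity there gives boundedness and uniform continuity, not a Lipschitz modulus. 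The paper sidesteps this by splitting the product, $\mb{a}(\mb{x}')-\mb{a}(\mb{x}) = (\mb{A}(\mb{x}')-\mb{A}(\mb{x}))^\top\grad V + \mb{A}^\top(\grad V(\mb{x}')-\grad V(\mb{x}))$, and carrying the term $\epsilon_{\infty}(\mb{x},\mb{x}') = \Vert\grad V(\mb{x})-\grad V(\mb{x}')\Vert_2$ explicitly inside $\epsilon$; that term is continuous, which is all the proposition demands of $\epsilon$, without $\grad V$ being Lipschitz. The same issue recurs if $\hat{\dot{V}}$ is taken in the form (\ref{eq:lyapderest}): then $\hat{\mb{a}}, \hat{b}$ enter $\mb{a}, b$ and need not be Lipschitz, which is why the paper adds the explicit continuous correction $\epsilon_{\cal{H}}$ of (\ref{eqn:esterrterm}). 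The fix to your argument is mechanical: replace $(L_{\mb{a}}\Vert\mb{u}'\Vert + L_b)\Vert\mb{x}-\mb{x}'\Vert$ by $\epsilon_L(\mb{x},\mb{x}')(L_{\mb{A}}\Vert\mb{u}'\Vert_2 + L_{\mb{b}}) + \epsilon_{\infty}(\mb{x},\mb{x}')(\Vert\mb{A}\Vert_{\infty}\Vert\mb{u}'\Vert + \Vert\mb{b}\Vert_{\infty}) + \epsilon_{\cal{H}}(\mb{x},\mb{x}',\mb{u}')$.

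On boundedness you go further than the paper, which only asserts in a remark that $\Delta(\mb{x})$ is bounded ``given sufficiently diverse control inputs,'' i.e., precisely your condition that the homogenized inputs $(\mb{u}',1)$ span $\R^{m+1}$. Your observation that a finite intersection of slabs is otherwise unbounded is correct, and intersecting with the a priori box $\Vert\mb{a}\Vert \leq \Vert\mb{A}\Vert_{\infty}\Vert\grad V(\mb{x})\Vert$, $\vert b\vert \leq \Vert\mb{b}\Vert_{\infty}\Vert\grad V(\mb{x})\Vert$ preserves containment and yields boundedness unconditionally. This does alter the stated form of $\Delta$ (or requires the synthetic-data device you describe), but it removes a genericity assumption the paper leaves implicit, at no cost to the downstream linear programs.
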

\begin{remark}
For all $\mb{x}\in\cal{X}$, $\Delta(\mb{x})$ is a closed, symmetric polyhedron and is bounded given sufficiently diverse control inputs in the dataset. In this case, $\Delta(\mb{x})$ is a compact, convex set. The supremum present in Theorem \ref{thm:isscond} and Corollary \ref{cor:boundeddist} becomes a linear program (LP) and can be efficiently solved.
\end{remark}
\begin{proof}[Proof of Proposition \ref{prop:cons}]
    Define observed error as:
    \begin{equation}\label{eqn:obsloss}
        \ell(\mb{x}, \mb{u}) = \left\vert \dot{V}(\mb{x}, \mb{u}, \delta) - \hat{\dot{V}}(\mb{x}, \mb{u}) \right\vert,
    \end{equation}
    for all $(\mb{x}, \mb{u}) \in D_0$. Consider a test point $(\mb{x}, \mb{u}) \in \cal{X} \times \cal{U}$ and a data point $(\mb{x}', \mb{u}') \in D_0$. Note that $\ell(\mb{x}', \mb{u}')$ satisfies:
    \begin{align}
        \ell(\mb{x}', \mb{u}') =&~ \vert \mb{a}(\mb{x}')^\top\mb{u}' + b(\mb{x}') \vert \nonumber\\
        =&~ \vert \mb{a}(\mb{x})^\top \mb{u}' + b(\mb{x}) + (\mb{a}(\mb{x}') - \mb{a}(\mb{x}))^\top\mb{u}' \nonumber \\ & \qquad + b(\mb{x}') - b(\mb{x}) \vert \nonumber\\
        \geq &~ \vert \mb{a}(\mb{x})^\top \mb{u}' + b(\mb{x}) \vert \nonumber \\ & \qquad - \Vert \mb{a}(\mb{x}') - \mb{a}(\mb{x}) \Vert_2 \Vert\mb{u}'\Vert_2 - \vert b(\mb{x}') - b(\mb{x}) \vert,
    \end{align}
    where the inequality follows from the reverse triangle inequality, triangle inequality, and Cauchy-Schwarz inequality. 
    
    For simplicity we proceed with the construction assuming the estimated Lyapunov function derivative is specified as in (\ref{eq:lyapder}). The resulting bound will be modified to include estimators as in (\ref{eq:lyapderest}). Note that:
    \begin{align}
        \Vert \mb{a}(\mb{x}') - \mb{a}(\mb{x}) \Vert_2 = &~ \Vert \mb{A}(\mb{x}')^\top\grad V(\mb{x}') - \mb{A}(\mb{x})^\top\grad V(\mb{x}) \Vert_2 \nonumber\\
        = &~ \Vert (\mb{A}(\mb{x}') - \mb{A}(\mb{x}))^\top\grad V(\mb{x}') \nonumber \\ & \qquad+ \mb{A}(\mb{x})^\top(\grad V(\mb{x}') - \grad V(\mb{x})) \Vert_2 \nonumber\\
        \leq &~ L_{\mb{A}} \Vert \mb{x}' - \mb{x} \Vert_2 \Vert \grad V(\mb{x}') \Vert_2\nonumber \\ &\qquad + \Vert \mb{A} \Vert_{\infty} \Vert \grad V(\mb{x}') - \grad V(\mb{x}) \Vert_2,
    \end{align}
    where the inequality follows from the triangle inequality, submultiplicativity of matrix norms, and Lipschitz and bounded assumptions for $\bf{A}$. Since it is also true that:
    \begin{align}
         \Vert \mb{a}(\mb{x}') - \mb{a}(\mb{x}) \Vert_2 = &~ \Vert \mb{A}(\mb{x}')^\top(\grad V(\mb{x}') - \grad V(\mb{x}))\nonumber \\ &~  + (\mb{A}(\mb{x}') - \mb{A}(\mb{x}))^\top\grad V(\mb{x}) \Vert_2,
    \end{align}
    then the following bound holds:
    \begin{align}
        \Vert \mb{a}(\mb{x}') - \mb{a}(\mb{x}) \Vert_2 \leq &~ L_{\mb{A}} \Vert \mb{x}' - \mb{x} \Vert_2 \Vert \grad V(\mb{x}) \Vert_2 \nonumber \\ &~ + \Vert \mb{A} \Vert_{\infty} \Vert \grad V(\mb{x}') - \grad V(\mb{x}) \Vert_2.
    \end{align}
    Let $\epsilon_{L}(\mb{x}, \mb{x}') = \norm{ \mb{x} - \mb{x}' } \min{\{ \norm{\grad V(\mb{x})}_2, \norm{\grad V(\mb{x}')}_2 \}}$, and let $\epsilon_{\infty}(\mb{x}, \mb{x}') = \norm{\grad V(\mb{x}) - \grad V(\mb{x}')}_2$. Observe that $\epsilon_L$ and $\epsilon_{\infty}$ are continuous functions. Next, note that:
    \begin{equation}
            \norm{\mb{a}(\mb{x}') - \mb{a}(\mb{x})}_2 \leq \epsilon_L(\mb{x}, \mb{x}') L_{\mb{A}} + \epsilon_{\infty}(\mb{x}, \mb{x}') \norm{\mb{A}}_{\infty}.
    \end{equation}
    Similarly,
    \begin{equation}
        \vert b(\mb{x}') - b(\mb{x})\vert_2 \leq \epsilon_L(\mb{x}, \mb{x}') L_{\mb{b}} + \epsilon_{\infty}(\mb{x}, \mb{x}') \norm{\mb{b}}_{\infty}.
    \end{equation}
    Therefore,
    \begin{align}\label{eqn:upperbound}
        \vert \mb{a}(\mb{x})^\top\mb{u}' + b(\mb{x}) \vert &\leq \ell(\mb{x}', \mb{u}') + \epsilon_L(\mb{x}, \mb{x}')(L_{\mb{A}}\norm{\mb{u}'}_2 + L_{\mb{b}}) \nonumber \\ 
        & \qquad+ \epsilon_{\infty}(\mb{x}, \mb{x}')(\norm{\mb{A}}_{\infty}\norm{\mb{u}'} + \norm{\mb{b}}_{\infty}).
    \end{align}
    
    While $\epsilon_L$ and $\epsilon_{\infty}$ decrease as the test point approaches data points, without estimators as in (\ref{eq:lyapderest}), the observed loss term $\ell$ can remain large. By including such estimators, the observed loss term may be reduced, but the bound must be modified with the following additional continuous function:
    \begin{equation}\label{eqn:esterrterm}
        \epsilon_{\cal{H}}(\mb{x}, \mb{x}', \mb{u}') = \vert (\hat{\mb{a}}(\mb{x}) - \hat{\mb{a}}(\mb{x}'))^\top \mb{u}' +  \hat{b}(\mb{x}) - \hat{b}(\mb{x}') \vert,
    \end{equation}
    which accounts for potential error in the estimation at the test point. $\epsilon$ is then specified as the total upper bound.
\end{proof}

\begin{corollary}
    The uncertainty set $\Delta$ as specified in (\ref{eqn:uncertaintyset}) is continuous with respect to the Hausdorff metric.
\end{corollary}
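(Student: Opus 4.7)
The plan is to prove Hausdorff continuity at an arbitrary $\mb{x}_0 \in \cal{X}$ by establishing the two one-sided distance bounds separately, exploiting that $\Delta(\mb{x})$ is a finite intersection of symmetric slabs through the origin. For each data point $(\mb{x}',\mb{u}') \in D_0$, the map $\mb{x} \mapsto \epsilon(\mb{x},\mb{x}',\mb{u}')$ is continuous by hypothesis, so I would view $\Delta(\mb{x})$ as the closed, convex, origin-symmetric polyhedron
\[
    \Delta(\mb{x}) = \bigcap_{(\mb{x}',\mb{u}') \in D_0} \bigl\{ (\mb{a},b) : |\mb{a}^\top\mb{u}' + b| \leq \epsilon(\mb{x},\mb{x}',\mb{u}')\bigr\}.
\]
The boundedness observation in the remark following Proposition \ref{prop:cons} makes $\Delta(\mb{x})$ compact, and $\mb{0} \in \Delta(\mb{x})$ makes it star-shaped about the origin; both structural features will be used repeatedly.

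For the upper-semicontinuous side, namely $\sup_{y \in \Delta(\mb{x})} d(y,\Delta(\mb{x}_0)) \to 0$ as $\mb{x}\to\mb{x}_0$, I would argue by closedness of the graph: continuity of $\epsilon$ gives uniform boundedness of $\Delta(\mb{x})$ in a neighborhood of $\mb{x}_0$, and any limit point of a sequence $y_n \in \Delta(\mb{x}_n)$ with $\mb{x}_n\to\mb{x}_0$ satisfies the defining constraints of $\Delta(\mb{x}_0)$ by passing to the limit. A standard subsequential argument then converts this pointwise closedness into the required one-sided Hausdorff bound.

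The main step is the lower-semicontinuous side, $\sup_{y \in \Delta(\mb{x}_0)} d(y,\Delta(\mb{x})) \to 0$, which I would handle via a scaling argument about the origin. Define
\[
    \lambda(\mb{x}) = \min\left\{ 1,\; \min_{\substack{(\mb{x}',\mb{u}') \in D_0 \\ \epsilon(\mb{x}_0,\mb{x}',\mb{u}') > 0}} \frac{\epsilon(\mb{x},\mb{x}',\mb{u}')}{\epsilon(\mb{x}_0,\mb{x}',\mb{u}')}\right\},
\]
with the convention $\lambda(\mb{x}) = 1$ if the inner minimum ranges over an empty index set. Continuity of $\epsilon$ on the finite index set yields $\lambda(\mb{x}) \to 1$ as $\mb{x} \to \mb{x}_0$. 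For any $y = (\mb{a},b) \in \Delta(\mb{x}_0)$, the scaled point $\lambda(\mb{x})\,y$ satisfies every defining constraint at $\mb{x}$: in the nondegenerate case $\lambda(\mb{x})\,|\mb{a}^\top \mb{u}' + b| \leq \lambda(\mb{x})\,\epsilon(\mb{x}_0,\mb{x}',\mb{u}') \leq \epsilon(\mb{x},\mb{x}',\mb{u}')$, while in the degenerate case $\epsilon(\mb{x}_0,\mb{x}',\mb{u}') = 0$ the constraint forces $\mb{a}^\top \mb{u}' + b = 0$, which is preserved under scaling. Hence $\lambda(\mb{x})\,y \in \Delta(\mb{x})$, so $\norm{y - \lambda(\mb{x})\,y} \leq (1 - \lambda(\mb{x})) R$ with $R = \sup_{y \in \Delta(\mb{x}_0)} \norm{y}$; taking the supremum over $y$ closes this side.

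The hard part will be the handling of data points at which $\epsilon(\mb{x}_0,\cdot,\cdot)$ vanishes, since these correspond to slabs of zero width that scaling cannot widen. The saving observation is that such degenerate constraints collapse to linear equalities through the origin, which are preserved under scaling about the origin, so the scaling argument still yields a feasible approximant. The remaining estimates then reduce to straightforward continuity of $\epsilon$ on the finite index set $D_0$.
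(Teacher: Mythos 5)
Your proof is correct, but it takes a genuinely different route from the paper. The paper's argument is essentially a citation: it rewrites the constraints in matrix form $\bs{\Xi}\,[\mb{a};b] \preceq \bs{\xi}(\mb{x})$ with $\bs{\xi}$ continuous (because $\epsilon$ is), and then invokes a known result on continuity of polyhedral point-to-set maps with fixed constraint matrix and continuous right-hand side. You instead give a self-contained, elementary proof that splits Hausdorff continuity into its two one-sided bounds: a closed-graph-plus-local-boundedness argument for the upper-semicontinuous side, and a scaling argument about the origin for the lower-semicontinuous side. The scaling step is the real content and it is sound: it exploits structure that the general polyhedral result does not need but that is present here, namely that every $\Delta(\mb{x})$ is a convex, origin-symmetric intersection of slabs containing $(\mb{0},0)$, so dilating a feasible point of $\Delta(\mb{x}_0)$ by the worst-case ratio of slab widths lands it in $\Delta(\mb{x})$, and your treatment of zero-width slabs (which collapse to homogeneous equalities preserved by scaling) correctly closes the only potential gap in that step. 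What the paper's route buys is brevity and generality (the cited result does not require the origin to be feasible, and Hoffman-type bounds would even give local Lipschitz continuity in the right-hand side); what your route buys is a transparent, reference-free proof whose only structural inputs are continuity of $\epsilon$ on the finite index set $D_0$ and boundedness of $\Delta(\mb{x})$. One point to make explicit: both your finite radius $R$ and your local uniform boundedness claim rely on the ``sufficiently diverse control inputs'' condition from the remark (the vectors $(\mb{u}',1)$ spanning $\R^{m+1}$); without it $\Delta(\mb{x})$ is unbounded and the Hausdorff metric is not defined in the first place, so this is a hypothesis you are entitled to, but it should be stated rather than left implicit.
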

\begin{proof}
    Note that the inequality constraints in (\ref{eqn:uncertaintyset}) can be rewritten as:
\begin{align}\label{eqn:deltasetmat}
    \Delta(\mb{x}) =&~ \left\{ (\mb{a}, b) \in \R^m \times \R: \bs{\Xi}\begin{bmatrix}\mb{a} \\ b\end{bmatrix}\preceq\bs{\xi}(\mb{x})\right\},
\end{align}
where $\bs{\Xi}\in\R^{2\vert D\vert\times (m+1)}$, $\bs{\xi}:\cal{X}\to\R^{2\vert D\vert}$ and $\preceq$ denotes elementwise inequality. The function $\bs{\xi}$ is continuous since $\epsilon$ is continuous; therefore, the results established in \cite{batson1987combinatorial} show that the point-to-set map $\Delta$ is continuous.
\end{proof}

\section{INTEGRATION WITH LEARNING}
\label{sec:learning}

We now explore the practical interplay between learning and systematic improvement of PSS properties, in particular by decreasing the upper bound in 
\eqref{eqn:upperbound}. By decreasing this bound, the uncertainty set in \eqref{eqn:uncertaintyset} can be made smaller, which in turn can increase the state space region over which PSS properties can be certified and/or achieve reduced degradation (see Corollary \ref{cor:usimprove}). As discussed previously, using PSS rather than ISS enables lower-dimensional learning objectives and upper bounds which can be efficiently evaluated during and after learning.

Learning also offers direct ways to decrease the upper bound in \eqref{eqn:upperbound}. As discussed in Section \ref{sec:modeling}, estimators can be used to reduce the observed loss in (\ref{eqn:obsloss}), which appears directly in the upper bound. We can use supervised learning to train such estimators. One complication is that, using baseline controllers, it may not be possible to collect data in regions we wish to certify PSS properties.  As the distances between a point of interest and previously collected data grow, $\epsilon_L$ and $\epsilon_\infty$ can grow larger, weakening uncertainty bounds at the point of interest. By refining the baseline controller using learned models, the system may be controlled towards these regions of interest.


\subsection{Episodic Learning Framework}

We demonstrate the practicality of PSS by incorporating it into an episodic learning framework based on learning CLF time derivatives \cite{taylor2019episodic}.
Controller improvement is achieved by alternating between executing a controller to gather data and refining estimates of residual uncertainty. 
As data collection and learning progresses, the size of uncertainty sets decreases, enabling stronger PSS certifications for the system.

We briefly describe the DaCLyF (Dataset Aggregation for Control Lyapunov Functions) learning approach from \cite{taylor2019episodic}. 
Let $\cal{H}_{\mb{a}}$ and $\cal{H}_{b}$ be nonlinear estimator classes, and let $\cal{H}$ be the class of estimators of the form:
\begin{equation}
    \hat{\dot{W}}(\mb{x}, \mb{u}) = \hat{\dot{V}}(\mb{x}, \mb{u}) + \hat{\mb{a}}(\mb{x})^\top\mb{u} + \hat{b}(\mb{x}),
\end{equation}
for all $\mb{x} \in \cal{X}$, given estimators $\hat{\mb{a}} \in \cal{H}_{\mb{a}}$ and $\hat{b} \in \cal{H}_b$. Equipped with a loss function $\cal{L}: \R \times \R \to \R_+$ and a dataset $D = \{((\mb{x}_i, \mb{u}_i), \dot{V}_i)\}$ obtained from experiments with a baseline controller, we approximately solve the Empirical Risk Minimization (ERM) problem over the class $\cal{H}$ to update the estimate of the Lyapunov function derivative. Then, the controller is updated with an augmenting controller of the form:
\begin{align}
    \mb{u}'(\mb{x}) = \argmin_{\mb{u}' \in \R^m} ~&~  \frac{1}{2} \begin{bmatrix} \mb{u}(\mb{x})\\ \mb{u}' \end{bmatrix}^\top \mb{P} \begin{bmatrix} \mb{u}(\mb{x})\\ \mb{u}' \end{bmatrix} + \mb{q}^\top \begin{bmatrix} \mb{u}(\mb{x})\\ \mb{u}' \end{bmatrix} + r\nonumber\\
    \mathrm{s.t } ~&~ \hat{\dot{V}}(\mb{x}, \mb{u}(\mb{x}) + \mb{u}') \leq -\alpha(\norm{\mb{x}}) \nonumber\\
    &~ \mb{u}(\mb{x}) + \mb{u}' \in \cal{U},
\end{align}
for all $\mb{x} \in \cal{C}$, with $\mb{P} \in \mathbb{S}^{2m}_+$, $\mb{q} \in \R^{2m}$, $r > 0$, and $\hat{\dot{V}}$ denoting the updated estimator. Here $\mathbb{S}^{2m}_+$ denotes the set of positive semidefinite matrices of size $2m \times 2m$. The augmenting controller, weighted by a trust factor, is additively incorporated with the baseline controller. As more data is collected, the trust factor is increased. This entire procedure is outlined in Algorithm \ref{alg:daclyf}.

The estimator class $\cal{H}$ can include a wide variety of nonlinear functions. Should $\cal{H}_{\mb{a}}$ and $\cal{H}_b$ be classes of Lipschitz continuous estimators, the upper bound (\ref{eqn:esterrterm}) can be weakened further using the associated Lipschitz constants to permit further analysis of the uncertainty function specified in (\ref{eqn:uncertaintyset}). Importantly, this motivates the use of spectrally normalized deep neural networks \cite{miyato2018spectral}, \cite{bartlett2017spectrally} for this estimation problem.

\begin{algorithm}[t]
    \begin{algorithmic}
        \Require Lyapunov function $V$, Lyapunov function derivative estimate $\hat{\dot{V}}_0$, model classes $\cal{H}_{\mb{a}}$ and $\cal{H}_b$, loss function $\cal{L}$, set of initial conditions $\cal{X}_0$, nominal state-feedback controller $\mb{u}_0$, number of experiments $T$, sequence of trust coefficients $0 \leq w_1 \leq \cdots \leq w_T \leq 1$
        \Statex
        \State $D = \emptyset$ \Comment{Initialize dataset}
        \For{$k = 1, \dots, T$}
            \State $\mb{x}_0 \leftarrow \textrm{sample}(\cal{X}_0)$ \Comment{Sample initial condition}
            \State $D_k \leftarrow \textrm{experiment}(\mb{x}_0, \mb{u}_{k-1})$ \Comment{Execute experiment}
            \State $D \leftarrow D \union D_k$ \Comment{Aggregate dataset}
            \State $\hat{\mb{a}}, \hat{b} \leftarrow \textrm{ERM}(\cal{H}_{\mb{a}}, \cal{H}_b, \cal{L}, D, \hat{\dot{V}}_0)$ \Comment{Fit estimators}
            \State $\hat{\dot{V}}_k \leftarrow \hat{\dot{V}}_0 + \hat{\mb{a}}^\top\mb{u} + \hat{b}$ \Comment{Update derivative estimator}
            \State $\mb{u}_{k} \leftarrow \mb{u}_0 + w_k \cdot \textrm{augment}(\mb{u}_0, \hat{\dot{V}}_k)$ \Comment{Update controller}
        \EndFor
        \Statex
        \Return $D, \hat{\dot{V}}_T, \mb{u}_T$
    \end{algorithmic}
    \caption{Dataset Aggregation for Control Lyapunov Functions (DaCLyF) \cite{taylor2019episodic}\label{alg:daclyf}}
\end{algorithm}

\subsection{Simulation Results}
\label{sec:sim}
In this section we apply Algorithm \ref{alg:daclyf} to an inverted pendulum model with parametric uncertainty. The pendulum is modeled as a massless rod with torque input at a fixed base. The true mass and the length are perturbed by up to 30\% of their estimated values. The baseline controller is a linear proportional derivative (PD) controller to track angle and angle rate trajectories. The estimators are chosen from the class of two layer neural networks with 200 hidden units and ReLU nonlinearities, mapping concatenated state and Lyapunov function gradients to $\R^m$ and $\R$. The trust factors are chosen in a sigmoid fashion. Naive exploratory control is introduced as in \cite{taylor2019episodic}, with perturbations chosen uniformly at random, independently in each coordinate, and scaled by 25\% of the norm of the current control input.

A comparison of the baseline controller and final augmented controller demonstrating improved tracking performance is shown in Fig. \ref{fig:trackingcomp}. A comparison of PSS bounds for the model-based QP controller and the final augmented controller is shown in Fig. \ref{fig:heatmaps}, with observed trajectories superimposed. The QP controller is unable to keep the system in regions in which the bound is shown to be small. On the other hand, the augmented controller keeps the system close to the desired trajectory, consistently near training data. The bounds are small along the observed trajectory, in comparison.


\begin{figure}[H]
    \centering
    \vspace{-0.15in}
    \includegraphics[scale=0.5]{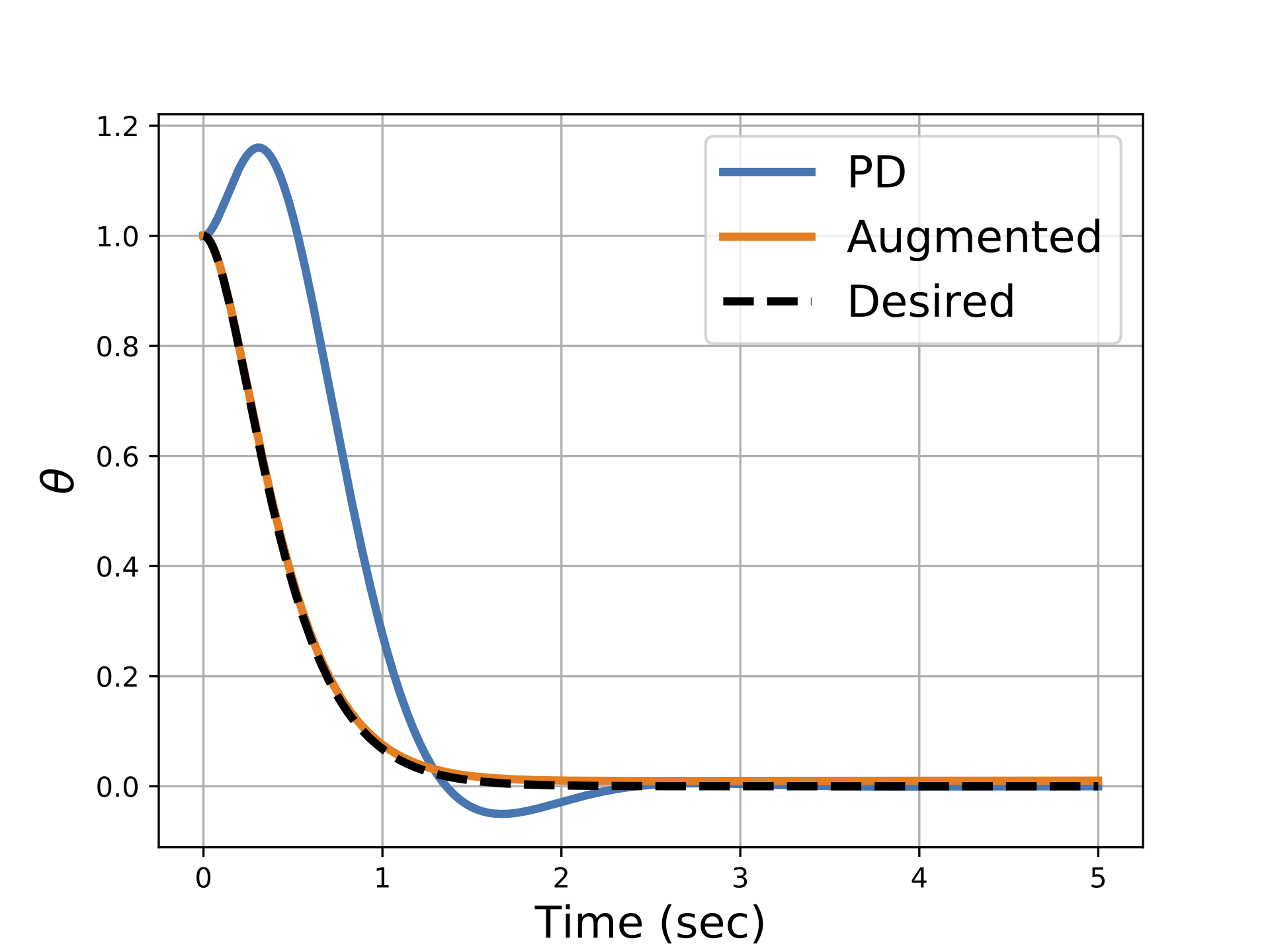}
    \vspace{-0.1in}
    \caption{Comparison of tracking performance for PD controller and final augmented controller. The final augmented controller tracks the desired angle trajectory more effectively.}
    \label{fig:trackingcomp}
\end{figure}

\begin{figure*}[tbph]
    \centering
    \includegraphics[scale=0.5]{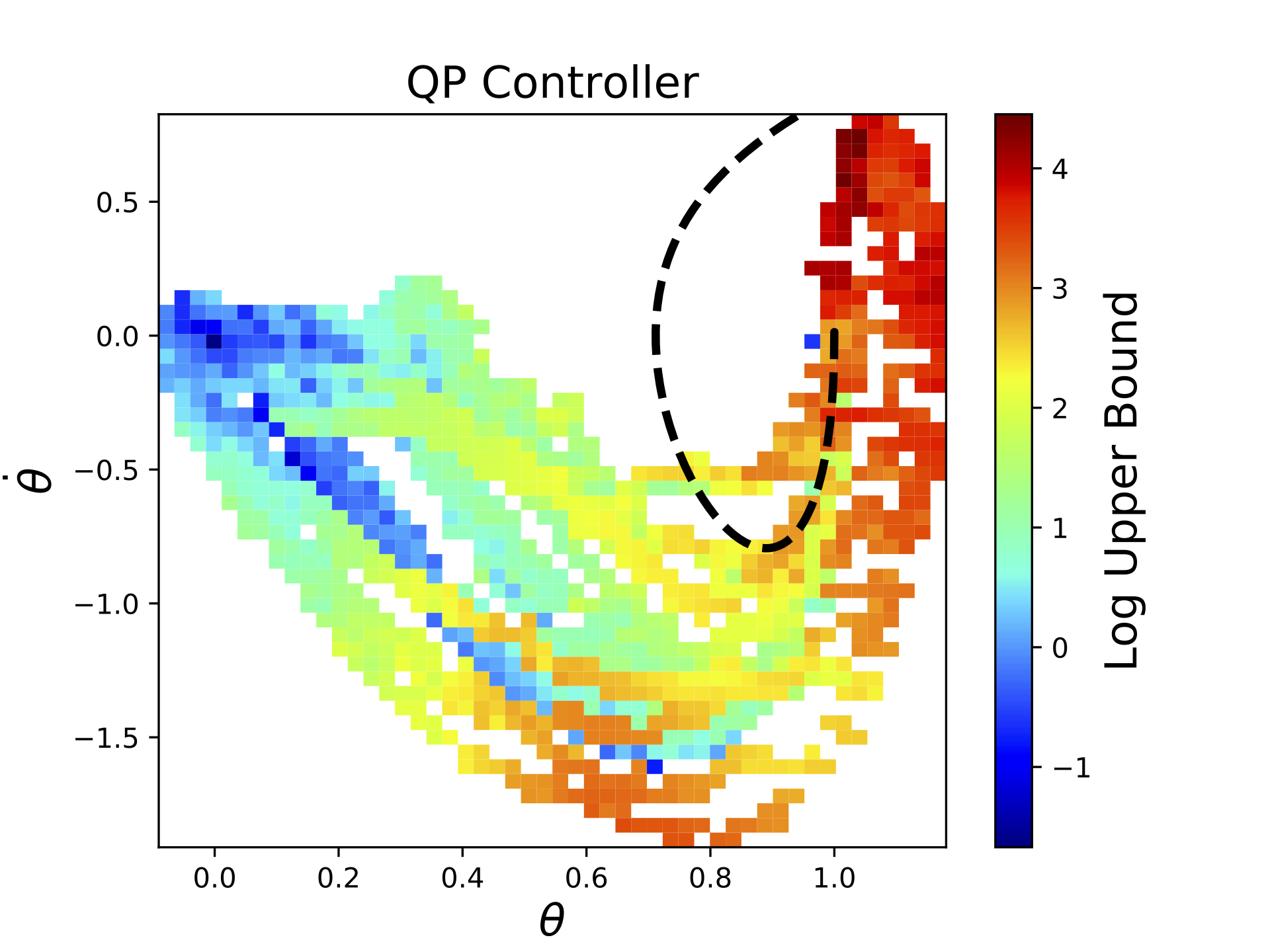}
    \includegraphics[scale=0.5]{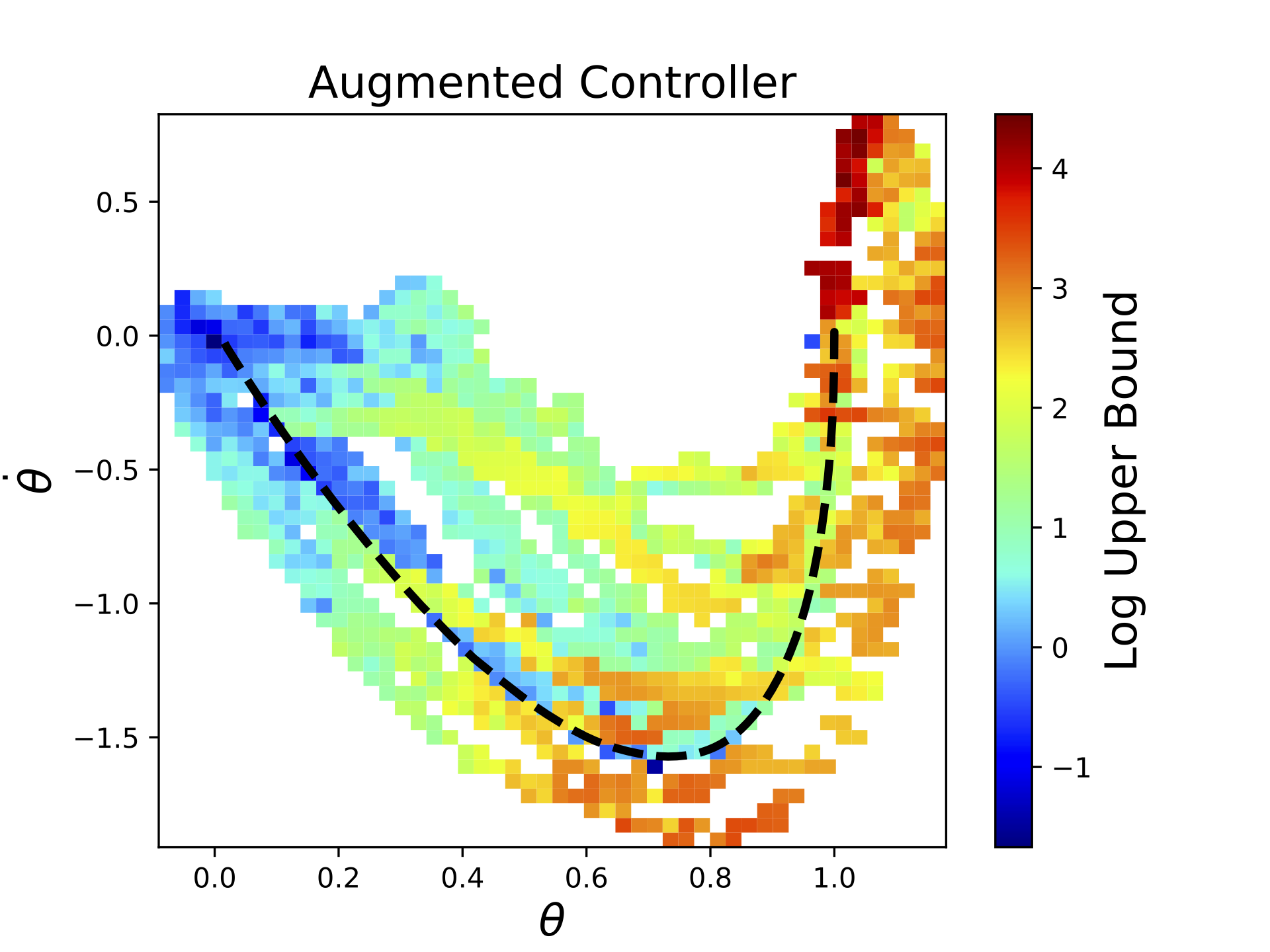}
    \vspace{-0.1in}
    \caption{Comparison of disturbance upper bounds with model based QP controller (left) and final augmented controller (right). The disturbance upper bound is computed from the maximum over uncertainty sets in (\ref{eqn:uncertaintyset}) for both controllers, and observed trajectories are displayed with dashed lines. The augmenting controller keeps the system in regions with lower disturbance bounds while the system leaves the region around the training data under the QP controller. The maps were generated by sampling states randomly about training data points and evaluating the upper bound for each sampled state. The results were then discretized for ease of visualization. Each bin is colored by the maximum disturbance observed in the bin.}
    \label{fig:heatmaps}
\end{figure*}

\section{CONCLUSION}
\label{sec:conclusion}
We presented a novel low-dimensional view of stability for uncertain systems and a method of evaluating PSS behavior using experimental data. This method constructs a bound on disturbances to a CLF derivative, and can be integrated with a machine learning framework to improve PSS behavior. Finally, we validate this procedure on a simulated system.

Future work includes incorporating the upper bounds into online learning settings and developing optimal exploration strategies. Quantifying the impact of learning on PSS provides an objective for deciding how to collect data, also known as the exploration problem in learning literature \cite{moldovan2012safe,aswani2013provably,berkenkamp2016safe,berkenkamp2016bsafe,sui2018stagewise}. In particular, reductions of the uncertainty bound may be used to formulate regret in online learning settings or reward in imitation and reinforcement learning settings.
Additional future work includes extending the notion of PSS to Control Safety and Barrier Functions, more thoroughly studying the benefits of learning low-dimensional representations of the dynamics versus the full-order dynamics, and utilizing PSS to augment controller synthesis for complex real-world robotic systems.


\bibliographystyle{plain}
\bibliography{main}

\begin{thebibliography}{10}

\bibitem{abraham2012manifolds}
Ralph Abraham, Jerrold~E Marsden, and Tudor Ratiu.
\newblock {\em Manifolds, tensor analysis, and applications}, volume~75.
\newblock Springer Science \& Business Media, 2012.

\bibitem{ames2014rapidly}
Aaron~D Ames, Kevin Galloway, Koushil Sreenath, and Jessy~W Grizzle.
\newblock Rapidly exponentially stabilizing control lyapunov functions and
  hybrid zero dynamics.
\newblock {\em IEEE Transactions on Automatic Control}, 59(4):876--891, 2014.

\bibitem{ames2013towards}
Aaron~D Ames and Matthew Powell.
\newblock Towards the unification of locomotion and manipulation through
  control lyapunov functions and quadratic programs.
\newblock In {\em Control of Cyber-Physical Systems}, pages 219--240. Springer,
  2013.

\bibitem{amos2018differentiable}
Brandon Amos, Ivan Jimenez, Jacob Sacks, Byron Boots, and J~Zico Kolter.
\newblock Differentiable mpc for end-to-end planning and control.
\newblock In {\em Advances in Neural Information Processing Systems}, pages
  8299--8310, 2018.

\bibitem{artstein1983stabilization}
Zvi Artstein.
\newblock Stabilization with relaxed controls.
\newblock {\em Nonlinear Analysis: Theory, Methods \& Applications},
  7(11):1163--1173, 1983.

\bibitem{aswani2013provably}
Anil Aswani, Humberto Gonzalez, S~Shankar Sastry, and Claire Tomlin.
\newblock Provably safe and robust learning-based model predictive control.
\newblock {\em Automatica}, 49(5):1216--1226, 2013.

\bibitem{bartlett2017spectrally}
Peter~L Bartlett, Dylan~J Foster, and Matus~J Telgarsky.
\newblock Spectrally-normalized margin bounds for neural networks.
\newblock In {\em Advances in Neural Information Processing Systems}, pages
  6240--6249, 2017.

\bibitem{batson1987combinatorial}
Robert~G Batson.
\newblock Combinatorial behavior of extreme points of perturbed polyhedra.
\newblock {\em Journal of mathematical analysis and applications},
  127(1):130--139, 1987.

\bibitem{berkenkamp2016bsafe}
Felix Berkenkamp, Riccardo Moriconi, Angela~P Schoellig, and Andreas Krause.
\newblock Safe learning of regions of attraction for uncertain, nonlinear
  systems with gaussian processes.
\newblock In {\em 2016 IEEE 55th Conference on Decision and Control (CDC)},
  pages 4661--4666. IEEE, 2016.

\bibitem{berkenkamp2016safe}
Felix Berkenkamp, Angela~P Schoellig, and Andreas Krause.
\newblock Safe controller optimization for quadrotors with gaussian processes.
\newblock In {\em 2016 IEEE International Conference on Robotics and Automation
  (ICRA)}, pages 491--496. IEEE, 2016.

\bibitem{chowdhary2015bayesian}
Girish Chowdhary, Hassan~A Kingravi, Jonathan~P How, and Patricio~A Vela.
\newblock Bayesian nonparametric adaptive control using gaussian processes.
\newblock {\em IEEE transactions on neural networks and learning systems},
  26(3):537--550, 2015.

\bibitem{dean2018regret}
Sarah Dean, Horia Mania, Nikolai Matni, Benjamin Recht, and Stephen Tu.
\newblock Regret bounds for robust adaptive control of the linear quadratic
  regulator.
\newblock In {\em Advances in Neural Information Processing Systems}, pages
  4192--4201, 2018.

\bibitem{freeman2008robust}
Randy Freeman and Petar~V Kokotovic.
\newblock {\em Robust nonlinear control design: state-space and Lyapunov
  techniques}.
\newblock Springer Science \& Business Media, 2008.

\bibitem{freeman1996inverse}
Randy~A Freeman and PV~Kokotovic.
\newblock Inverse optimality in robust stabilization.
\newblock {\em SIAM journal on control and optimization}, 34(4):1365--1391,
  1996.

\bibitem{galloway2015torque}
Kevin Galloway, Koushil Sreenath, Aaron~D Ames, and Jessy~W Grizzle.
\newblock Torque saturation in bipedal robotic walking through control lyapunov
  function-based quadratic programs.
\newblock {\em IEEE Access}, 3:323--332, 2015.

\bibitem{kellett2014compendium}
Christopher~M Kellett.
\newblock A compendium of comparison function results.
\newblock {\em Mathematics of Control, Signals, and Systems}, 26(3):339--374,
  2014.

\bibitem{Khalil}
H.K. Khalil.
\newblock {\em Nonlinear Systems - 3rd Edition}.
\newblock PH, Upper Saddle River, NJ, 2002.

\bibitem{krstic1995control}
Miroslav Krsti{\'c} and Peter~V Kokotovi{\'c}.
\newblock Control lyapunov functions for adaptive nonlinear stabilization.
\newblock {\em Systems \& Control Letters}, 26(1):17--23, 1995.

\bibitem{lin1991universal}
Yuandan Lin and Eduardo~D Sontag.
\newblock A universal formula for stabilization with bounded controls.
\newblock {\em Systems \& Control Letters}, 16(6):393--397, 1991.

\bibitem{ma2017bipedal}
Wen-Loong Ma, Shishir Kolathaya, Eric~R Ambrose, Christian~M Hubicki, and
  Aaron~D Ames.
\newblock Bipedal robotic running with durus-2d: Bridging the gap between
  theory and experiment.
\newblock In {\em Proceedings of the 20th International Conference on Hybrid
  Systems: Computation and Control}, pages 265--274. ACM, 2017.

\bibitem{miyato2018spectral}
Takeru Miyato, Toshiki Kataoka, Masanori Koyama, and Yuichi Yoshida.
\newblock Spectral normalization for generative adversarial networks.
\newblock {\em arXiv preprint arXiv:1802.05957}, 2018.

\bibitem{moldovan2012safe}
Teodor~Mihai Moldovan and Pieter Abbeel.
\newblock Safe exploration in markov decision processes.
\newblock {\em arXiv preprint arXiv:1205.4810}, 2012.

\bibitem{nagumo1942lage}
Mitio Nagumo.
\newblock {\"U}ber die lage der integralkurven gew{\"o}hnlicher
  differentialgleichungen.
\newblock {\em Proceedings of the Physico-Mathematical Society of Japan. 3rd
  Series}, 24:551--559, 1942.

\bibitem{nguyen2015optimal}
Quan Nguyen and Koushil Sreenath.
\newblock Optimal robust control for bipedal robots through control lyapunov
  function based quadratic programs.
\newblock In {\em Robotics: Science and Systems}, 2015.

\bibitem{shi2019neural}
Guanya Shi, Xichen Shi, Michael O'Connell, Rose Yu, Kamyar Azizzadenesheli,
  Animashree Anandkumar, Yisong Yue, and Soon-Jo Chung.
\newblock Neural lander: Stable drone landing control using learned dynamics.
\newblock In {\em IEEE International Conference on Robotics and Automation
  (ICRA)}, 2019.

\bibitem{sontag1989smooth}
Eduardo~D Sontag.
\newblock Smooth stabilization implies coprime factorization.
\newblock {\em IEEE transactions on automatic control}, 34(4):435--443, 1989.

\bibitem{sontag2008input}
Eduardo~D Sontag.
\newblock Input to state stability: Basic concepts and results.
\newblock In {\em Nonlinear and optimal control theory}, pages 163--220.
  Springer, 2008.

\bibitem{sontag1995characterizations2}
Eduardo~D Sontag and Yuan Wang.
\newblock On characterizations of input-to-state stability with respect to
  compact sets.
\newblock In {\em Nonlinear Control Systems Design 1995}, pages 203--208.
  Elsevier, 1995.

\bibitem{sontag1995characterizations}
Eduardo~D Sontag and Yuan Wang.
\newblock On characterizations of the input-to-state stability property.
\newblock {\em Systems \& Control Letters}, 24(5):351--359, 1995.

\bibitem{sui2018stagewise}
Yanan Sui, Vincent Zhuang, Joel~W Burdick, and Yisong Yue.
\newblock Stagewise safe bayesian optimization with gaussian processes.
\newblock In {\em International Conference on Machine Learning (ICML)}, 2018.

\bibitem{taylor2019episodic}
Andrew~J Taylor, Victor~D Dorobantu, Hoang~M Le, Yisong Yue, and Aaron~D Ames.
\newblock Episodic learning with control lyapunov functions for uncertain
  robotic systems.
\newblock {\em arXiv preprint arXiv:1903.01577}, 2019.

\bibitem{vapnik1999overview}
Vladimir~N Vapnik.
\newblock An overview of statistical learning theory.
\newblock {\em IEEE transactions on neural networks}, 10(5):988--999, 1999.

\bibitem{wang1996converse}
Yuan Wang.
\newblock A converse lyapunov theorem with applications to iss-disturbance
  attenuation.
\newblock In {\em Proc. 13th IFAC World Congress}, pages 79--84, 1996.

\bibitem{zhou2002covering}
Ding-Xuan Zhou.
\newblock The covering number in learning theory.
\newblock {\em Journal of Complexity}, 18(3):739--767, 2002.

\bibitem{zhou1996robust}
Kemin Zhou, John~Comstock Doyle, Keith Glover, et~al.
\newblock {\em Robust and optimal control}, volume~40.
\newblock Prentice hall New Jersey, 1996.

\end{thebibliography}

\end{document}